\def \hh{\vskip 0.5\baselineskip \hbox to \hsize}
\newtheorem{definition}{Definition}
\newtheorem{theorem}{Theorem}
\newtheorem{lemma}{Lemma}
\begin{document}

\title{The Impact of Incomplete Information on Games in Parallel Relay
Networks}
\author{{\large Hongda Xiao and Edmund M. Yeh} \\
{\normalsize Department of Electrical Engineering, Yale University, USA}}
\maketitle

\begin{abstract}
We consider the impact of incomplete information on incentives
for node cooperation in parallel relay networks with one source node, one
destination node, and
multiple relay nodes. All nodes are selfish and strategic, interested in
maximizing their own profit instead of the social welfare. We consider the
practical situation where the channel state on any given relay path is
not observable to the source or to the other relays.  We examine different bargaining
relationships between the source and the relays, and propose a framework for
analyzing the efficiency loss induced by incomplete information. We analyze
the source of the efficiency loss, and quantify the amount of inefficiency
which results.
\end{abstract}




\section{{\protect\large {Introduction}}}

There is now widespread awareness of the importance of incentives in the
management of communication networks \cite{paper:BLV05, paper:CGKO04, paper:MQ05, book:BH07, paper:KMT98, paper:CDR03}. Network nodes often cannot be relied
upon to cooperatively implement network algorithms in the service of the
social good. Instead, selfish nodes will behave in a given manner only if it
is profitable for them to do so. Of clear interest is the impact of such
selfish actions on the social good. From the network point of view, it is
important to design incentives such as pricing schemes, which induce selfish
behavior aligned with the social good.

In single-hop networks, the incentive issue and its impact on social
efficiency have been extensively studied. In \cite{Roughgarden02,paper:Rou05}, the
authors considered the Nash Equilibrium for selfish routing, in which source
packets choose paths to the destination to minimize their individual
latency, rather than complying with a global routing algorithm to achieve
social optimality. In \cite{Walrand05} and \cite{Srikant05}, the authors
consider network service pricing for internet service providers. They showed
that cooperation among multiple service providers is required when their
links are used by common users. In \cite{Ozdaglar08}, the authors study
competitive behavior among multiple parallel links, and characterized the
efficiency loss due to competition.

The issue of incentives has also been investigated for multi-hop networks. A
number of papers~\cite{Neely07, Yang03, selfishadhoc} advocate the use of
credits to provide incentives for network nodes to cooperate. In \cite%
{Roughgarden08}, the authors investigate the impact of heterogeneous traffic
on the pricing of network service providers. Selfish behavior has also been
investigated in the context of cooperative relay networks. In \cite{Yufang06}%
, the authors considered a nonlinear pricing game, where the relay nodes
propose nonlinear charging functions to the source, and the source allocates
the traffic to minimize the payment to relay nodes. In \cite{Liu09}, the
authors considered a Stackelberg bargaining game, in which the relay nodes
cooperate as one party in competing with the source node.

All the above papers assume a complete information setting where players in
the network game have complete knowledge about quantities such as the state
of network links. In practice, this assumption is often too strong.
Information regarding network quantities is typically incomplete and
imperfect. In an internet service provider (ISP) pricing game, for instance,
the characteristics and service requirements of the users can be opaque to
the service providers \cite{Srikant07}. In a multi-hop network such as the
Internet, a source does not typically have perfect information on the
congestion state of links a few hops away \cite{Gallager}. Finally, in
wireless networks, the source usually cannot observe or test the channel
state from a relay to the destination. Neither can a relay observe the
channel state from other relays to the destination. Given the above, it is
clear that in analyzing selfish behavior in network settings, the role of
incomplete information must be emphasized.

One approach to network design problems with incomplete information is
through dominant implementable mechanisms \cite{Nisan99}. This idea has been
used in the context of spectrum auctions \cite{Archer04} and communication
networks \cite{Johari05}. These mechanisms, however, require a centralized
authority and extra funding from an outsider. This makes the extension to
general multi-hop networks difficult. Another approach, based on the idea of
Bayesian Nash Equilibrium, a generalization of the Nash Equilibrium concept,
is advocated in \cite{Gairing08}. Here, the authors consider selfish routing
in a single-hop network, where every source node knows only its own traffic
requirement, but has knowledge of the traffic distribution of other sources.
While the results in \cite{Gairing08} are appealing, it remains unclear how
they might extend to the multi-hop network situation.

In this work, we investigate the impact of incomplete information on the
problem of pricing and incentives in a two-hop parallel relay network.
We consider two scenarios, one in which the source has limited bargaining power
and one in which the source has full bargaining power.
In the limited bargaining power scenario, the source can only react passively to the relays'
signals, and the game can be considered to be a pricing game.
For this case, we show that all Nash Equilibria in the complete information
game are efficient, including those induced by linear charging functions. We
then characterize the Bayesian Nash Equilibrium for the incomplete
information game in which relays propose linear pricing functions, and
show that incomplete information can induce inefficiencies, which are
exacerbated by asymmetric prior knowledge on the type distribution.   Next,
in the scenario where the source has full bargaining power, the source
is allowed to provide a general contract.  For this case, we first show that in the
game with complete information, (Bayesian) Nash equilibria exist and are all
efficient. Next, we investigate the game with incomplete information. To
deal with the difficulty of characterizing the Bayesian Nash Equilibria in
this case, we first show that if a resource allocation outcome can be
realized by a Bayesian Nash equilibrium, then there exists a ``truth
telling" Bayesian Nash equilibrium that realizes the outcome. We then show
that the set of outcomes for the ``truth telling" Bayesian Nash equilibria is
included in the set of outcomes for
the Nash equilibria for a {complete information game}, in
which the link cost functions are replaced by specified ``virtual cost
functions." Using this approach, we obtain for a symmetric network
scenario a bound on the amount of inefficiency which may result from
incomplete information.

\section{{\protect\large {Network Model}}}
\label{model}
\subsection{{\protect\large \textsl{Network Traffic Allocation}}}

In wireline and wireless networks, it is often the case that an information
source cannot directly reach its destination, but must do so with the aid of
intermediate relays. We model such a situation as follows. Consider a
parallel relay network modelled by a directed graph $G=(V,E)$, with a single
source $s$, destination $d$, and a set of relays $I$, where $|I|=n$. We
assume that there is no direct link between $s$ and $d$. Instead, The relays
in $I$ are used to forward traffic in a two-hop fashion from $s$ to $d$.



The source wishes to maintain a certain rate of transmission with the
destination. We shall consider two scenarios. In the first \emph{inelastic}
scenario, the source has a fixed rate $r_s$ of transmission. This rate must
be carried by the relays in $I$, where the traffic rate forwarded by relay $%
i $ is $r_i$, and $\sum_{i =1}^n r_i = r_s$. In the second \emph{elastic}
scenario, the source may be willing to withhold some of its transmission
rate, according to how the cost of sending traffic affects it overall
utility. Let $r_0$ denote the amount of traffic withheld or rejected. Then $%
r_s - r_0$ is the total admitted traffic from the source. A traffic vector $%
\mathbf{r} \triangleq (r_{0},r_1, \ldots, r_n) \in {\mathbb{R}}_+^{n+1}$ is
a feasible routing of the source traffic if it satisfies $r_0 + \sum_{i=1}^n
r_i = r_s$.

\subsection{{\protect\large \textsl{Cost Function and Utility Function}}}

\label{sec:model} In general, for any relay node $i$, there is a cost
involved in forwarding traffic for source $s$. This cost typically depends
both on the properties of the links adjacent on relay $i$ and the amount of
traffic flowing through those links. Denote the traffic flow on link $(i,j)
\in E$ by $f_{ij}$. We assume that link $(i,j)$ has a cost function $%
C_{ij}(\theta _{ij},f_{ij})$ with $C_{ij}(\theta _{ij},0)=0$, where $\theta
_{ij}$ is a measure of the quality of link $(i,j)$. This quality may have
different physical meanings in different contexts. For example, if the cost
function reflects the queuing delay on $(i,j)$, then using the M/M/1
approximation, $C_{ij}(\theta _{ij},f_{ij})=\frac{f_{ij}}{k_{ij}-f_{ij}}$.
Here, $\theta_{ij}$ denotes the link capacity $k_{ij}$. For another example,
consider the cost of power assumption required for transmitting traffic of
rate $f_{ij}$ over a wireless link with channel gain $g_{ij}$, bandwidth $W$%
, and receiver noise power $N$. Using the Shannon capacity formula, we have $%
f_{ij} = W\log (1+g_{ij}P_{ij}/N)$, where $P_{ij}$ is transmission power
required on link $(i,j)$. Thus, the link cost is
\begin{equation*}
C_{ij}(\theta_{ij},f_{ij}) =\frac{N}{g_{ij}}(2^{f_{ij}/W}-1).
\end{equation*}
Here, $\theta _{ij}$ denotes the channel gain $g_{ij}$.

\begin{figure}[tbp]
\centering
\includegraphics[scale = 0.45, bb=194 212 528 497]{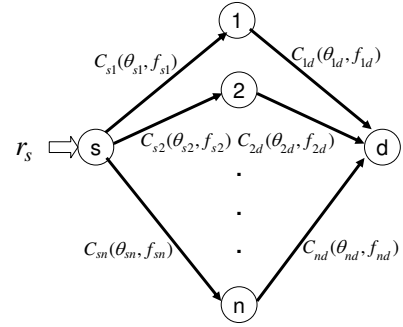}
\caption{relay channel}
\end{figure}

Now consider the overall cost $C_{i}(\theta _{i},r_{i})$ for relay node $i$
to forward traffic of rate $r_{i}$ from source $s$ to destination $d$, where
$\theta _{i}$ measures the quality or \emph{type} of the \emph{path} from $s$
to $d$ through $i$. We assume that $C_{i}(\theta _{i},r_{i})=C_{si}(\theta
_{si},r_{i})+C_{id}(\theta _{id},r_{i})$. The costs $C_{i}(\theta
_{i},r_{i}) $ are particularly amenable to analysis if $\theta _{i}$ can be
expressed as a simple scalar function of $\theta _{si}$ and $\theta _{id}$: $%
\theta _{i}=h(\theta _{si},\theta _{id})$. This is true in the example of
the power consumption cost function given above, where $\theta _{ij}=g_{ij}$
is the channel gain on link $(i,j)$. Normalizing the bandwidth and receiver
noise power to 1, we have
\begin{eqnarray}
C_{i}(\theta _{i},r_{i}) &=&P_{si}+P_{id}  \notag \\
&=&(2^{r_{i}}-1)/g_{si}+(2^{r_{i}}-1)/g_{id}  \notag \\
&=&(2^{r_{i}}-1)\theta _{i}^{-1},
\label{Cost Function}
\end{eqnarray}%
where $\theta _{i}\triangleq (g_{si}^{-1}+g_{id}^{-1})^{-1}=(\theta
_{si}^{-1}+\theta _{id}^{-1})^{-1}$. In this paper, we focus on situations
where the path quality $\theta _{i}$ can be expressed as a scalar function
of $\theta _{si}$ and $\theta _{id}$. We further assume that $\theta _{i}$
belongs to a compact interval $[\underline{\theta }_{i},\overline{\theta }%
_{i}]$.

Motivated by the power consumption example, we assume that $%
C_{i}(\theta_{i},r_{i})$ is twice continuously differentiable on $[%
\underline{\theta}_i, \overline{\theta}_i] \times [0, r_s]$, and strictly
increasing and convex in $r_i$: ${\partial {C_{i}(\theta _{i},r_{i})}}/{%
\partial {r_{i}}} > 0$ and ${\partial ^{2}{C_{i}(\theta _{i},r_{i})}}/{%
\partial {r_{i}^{2}}} > 0$. Also, assume that $C_{i}(\theta_{i},r_{i})$ is
strictly decreasing in $\theta_i$: ${\partial {C_{i}(\theta _{i},r}_{i}{)}}/{%
\partial {\theta _{i}}} < 0$. Furthermore, assume ${\partial ^{2}{C_i(\theta
_{i},r_{i})}}/{\partial {\theta _{i}}\partial{r_{i}}}\leq 0$.

Now consider the source $s$. In the inelastic case, source $s$ sends traffic
at a fixed rate $r_{s}$ into the network. In the elastic case, source $s$
may withhold traffic of rate $r_{0}$ from the network, and send the other
part of the traffic $r_{s}-r_{0}$ into the network. Let the utility function
of the source be given by $W_{s}(\theta _{s},r)$, where $\theta _{s}\in
\lbrack \underline{\theta }_{s},\overline{\theta }_{s}]$ parameterizes the
utility for the source, and $r$ is the source rate admitted into the
network. For example, the source utility may be $W_{s}(\theta _{s},r)=\theta
_{s}\log (1+r)$. Assume that $W_{s}(r)=W_{s}(r_{s})$ for all $r\geq r_{s}$,
i.e. $r_{s}$ is the maximum desired source rate. $W_{s}(\theta _{s},r)$ is
assumed to be continuously differentiable, strictly increasing and concave
in $r$ on $[0,r_{s}]$. Let $C_{s}(\theta _{s},r_{0})\triangleq
W_{s}(r_{s})-W_{s}(r_{s}-r_{0})$ denote the source's \emph{utility loss}
from having traffic of rate $r_{0}$ withheld from the network. Equivalently,
if $r_{0}$ is regarded as the traffic rate routed on a \emph{virtual
overflow link} directly from $s$ to $d$~\cite{Gallager}, then $C_{s}(\theta
_{s},r_{0})$ represents the cost on the overflow link when the link
parameter is $\theta _{s}$ and the flow rate is $r_{0}$. Since $W_{s}(r_{s})$
is a constant, it can be seen that $C_{s}(\theta _{s},r_{0})$ is
continuously differentiable on $[\underline{\theta }_{s},\overline{\theta }%
_{s}]\times \lbrack 0,r_{s}]$, strictly increasing and convex in $r_{0}$: ${%
\partial {C}_{s}{(\theta _{s},r_{0})}}/{\partial r{_{0}}}>0$ and ${\partial
^{2}{C}_{s}{(\theta _{s},r_{0})}}/{\partial r{_{0}^{2}}}>0$. Furthermore, we
assume that $C_{s}(\theta _{s},r_{0})$ is strictly decreasing in $\theta
_{s} $: ${\partial {C}_{s}{(\theta _{s},r_{0})}}/{\partial \theta {_{s}}}<0$%
. Finally, it can be seen that $C_{s}(\theta _{s},0)=0$ for all $\theta _{s}$%
. It can easily be checked that these properties are satisfied for the
example $W_{s}(\theta _{s},r)=\theta _{s}\log (1+r)$, for which $%
C_{s}(\theta _{s},r_{0})=W_{s}(r_{s})-\theta _{s}\log (1+r_{s}-r_{0})$. With
the aid of the virtual overflow link, we may view a game with an elastic
source as a game with an inelastic source of rate $r_{s}$ and an overflow
link $(s,w)$ with cost function $C_{s}(\theta _{s},r_{0})$.

\subsection{{\protect\large \textsl{Socially Optimal Allocation}}}

\label{sec:efficient}

A socially optimal traffic allocation in a parallel relay network is an
allocation which minimizes the total network cost, assumed to be the sum of
the link costs. Such an allocation can be realized through cooperation of
the network nodes. In networks with selfish and strategic nodes, a socially
optimal allocation may or may not be realizable. Nevertheless, the optimal
allocation serves as an important benchmark with which to measure the amount
of potential inefficiency introduced by selfish and strategic behavior.

Let $R \triangleq \{(r_{0},r_{1},...,r_{n})$$: r_j \geq 0~\forall j=0, \ldots,
n, \sum_{j=0}^n r_j = r_s\}$ be the set of feasible traffic allocations, and
let $\mathbf{r} \in R$ denote the vector of traffic rates in the network,
where $r_0$ is the rate withheld by the source, and $r_i$ is the rate routed
to relay $i, i=1, \ldots, n$. Note that for the case of an inelastic source,
$r_0 = 0$.

\begin{definition}
A traffic allocation vector $\mathbf{r}^*$ is called socially optimal if
\begin{equation}
\mathbf{r}^* \in \arg \min_{\mathbf{r} \in R}C_{s}(\theta_{s},r_{0})
+\sum_{i=1}^{n}C_{i}(\theta _{i},r_{i}).
\end{equation}
\end{definition}

Since the link cost functions $C_{i}(\theta _{i},r_{i})$ as well as $%
C_{s}(\theta_{s},r_{0})$ are all strictly increasing and strictly convex,
the socially optimal allocation $\mathbf{r}^*$ exists and is unique. The
conditions for specifying $\mathbf{r}^*$ can be obtained using the
Kuhn-Tucker conditions. Let $c_i(\theta_i, r_i) \triangleq {\partial {%
C_{i}(\theta _{i},r_{i})}}/ {\partial {r_{i}}}$ and $c_s(\theta_s, r_0)
\triangleq {\partial {C_{s}(\theta _{s},r_{0})}}/ {\partial {r_{0}}}$ denote
the marginal cost function of link $i$ and the marginal cost function of the
overflow link for source $s$, respectively.

For the case of an inelastic source, $\mathbf{r}^* = (0, r^*_1, \ldots,
r^*_n)$ is the socially optimal allocation if and only if for each $%
i=1,\ldots, n$,
\begin{equation}
c_i(\theta_i, r^*_i) = c^*~\text{if}~r^*_i > 0, \quad c_i(\theta_i, r^*_i) >
c^*~\text{if}~r^*_i = 0.  \label{eq:optcond}
\end{equation}
For the case of an elastic source, $\mathbf{r}^* = (r^*_0, r^*_1, \ldots,
r^*_n)$ is the socially optimal allocation if and only if~\eqref{eq:optcond}
holds and furthermore,
\begin{equation*}
c_s(\theta_s, r^*_0) = c^*~\text{if}~r^*_0 > 0, \quad c_s(\theta_s, r^*_0) >
c^*~\text{if}~r^*_0 = 0.
\end{equation*}












\subsection{{\protect\large \textsl{Game Structure}}}

\label{sec:general_game} Unlike the cooperative setting, in a network
consisting of selfish and strategic nodes, the source as well as the relays
will strategize to maximize their own utility, rather than work together to
minimize the overall network cost. Since forwarding traffic entails cost,
the relays will carry the source's traffic only if they are sufficiently
well compensated. The source, on the hand, wishes to have its traffic
forwarded at the smallest possible cost to itself. The natural setting in
which to carry out this game is one which allows for transfer payments which
accompany traffic allocations from the source to the respective relays.

In this work, we assume that the (maximum) source input rate $r_s$ and the
parameter $\theta_s$ are known to all nodes. As discussed above, the cost
function $C_{i}(\theta_{i},r_{i})$ for relay $i$ depends on the path quality
parameter or type $\theta_i$. In practical network settings, the value of
this type may be randomly fluctuating. For instance, in wireless
communication, the channel gain $g_{ij}$ fluctuates due to shadowing and
fading. In the Internet, the quality of a particular path may fluctuate
according to network congestion levels. Accordingly, we may assume that $%
\theta_i$ is randomly distributed according to distribution function $%
F_i(\theta_i)$. In practical network scenarios, the exact realization of $%
\theta_i$ is typically known only to relay $i$, and not to the source or to
the relays other than $i$. Thus, $\theta_i$ is \emph{private information} to
relay $i$. Nevertheless, the source and other relays may still have
knowledge of the distribution $F_i(\theta_i)$. For instance, a wireless
source or a relay $j \neq i$ may know the distribution of the channel gains
for relay $i$, but typically does not know the realization of those channel
gains. An Internet source or a path $j \neq i$ may know the distribution of
the congestion level on path $i$, but does not know the exact realization of
the congestion level.

In order for the source node to allocate its traffic intelligently in the
presence of incomplete information regarding the $\theta_i$'s, it needs to
observe some ``signal" from the relay nodes. This can be realized by having
the relay node send a signal according to the realization of its type to the
source.\footnote{%
One can also consider the possibility of the source sending a signal
according to its type $\theta_s$. However, since we assume $\theta_s$ is
known to all network nodes, we do not consider this possibility here.} Let $%
M_i$ be the set of signals for relay $i$, where $M_i$ is a subset of the set
of differentiable functions on $[0, r_s]$. The signal map for relay $i$ is
\begin{equation*}
s_i : \Theta _{i} \rightarrow M_{i},
\end{equation*}
where $\Theta _{i} \triangleq [\underline{\theta}_i, \overline{\theta}_i]$
and $s_i(\theta_i) = m_i(\cdot)$.

Given the signals $m_i(\cdot), i=1,\ldots,n$, the source decides on an
allocation of its traffic as well as a vector of transfer payments to the
relays. This allocation is called a \emph{contract}. Let $\mathbf{r}%
=(r_{0},r_{1},...,r_{n}) \in R$ denote the vector of traffic
rates in the network, where $r_0$ is the rate withheld by the source, and $%
r_i$ is the rate routed to relay $i, i=1, \ldots, n$. Note that for the
inelastic case, $r_0 = 0$. Now let $\mathbf{t}=(t_{1},t_{2},...,t_{n}) \in {%
\mathbb{R}}_+^n$ be the vector of transfer payments, where $t_i$ is the
transfer payment to relay $i$. Let $M \triangleq M_1 \times \cdots \times
M_n $ and $T \triangleq {\mathbb{R}}_+^n$.
Then the allocation map of the source node is
\begin{equation*}
g:M\rightarrow R\times T,
\end{equation*}
where $g(m_1(\cdot), \ldots, m_n(\cdot)) = (\mathbf{r}, \mathbf{t})$.

The above framework encompasses many forms of pricing games explored in
previous literature. For instance, in~\cite{Yufang06}, the relay signals are
simply charging functions $P_i(\cdot)$, and the transfer payments are
required to equal the charges demanded by the relays, i.e. $t_i = P_i(r_i)$.

The signal maps of the relays along with the allocation map of the source
realize a corresponding network allocation map
\begin{equation*}
f:\Theta \rightarrow R\times T,
\end{equation*}%
where $f(\theta _{1},\ldots ,\theta _{n})=g(s_{1}(\theta _{1}),\ldots
,s_{n}(\theta _{n}))=(\mathbf{r},\mathbf{t})$.

In the game with incomplete information corresponding to the above setting,
the utility of the source is given by
\begin{equation*}
{U}_s(\theta_s, g(s_1(\theta_1), \ldots, s_n(\theta_n)) = W_s(r_s) -
C_s(\theta_s, r_0) - \sum_{i=1}^n t_i.
\end{equation*}
The utility of relay $i$ is given by
\begin{equation*}
U_i(\theta_i, g(s_1(\theta_1), \ldots, s_n(\theta_n))) = t_i - C_i(\theta_i,
r_i).
\end{equation*}

The game with incomplete information proceeds as follows. First, each relay $%
i$ observes its own private information $\theta_i$. Second, the source
provides a contract for the relay nodes. The contract announces the source
allocation rule $g:M\rightarrow R\times T$. Third, the relays simultaneously
decide to either accept or reject the contract. If a given relay accepts the
contract, then it will participate in the game which follows. Otherwise, the
relay quits and receives zero utility.\footnote{Note that the relays which
quit can simply be left out of the game formulation.  Thus, without loss of generality,
we assume for the rest of the paper that the source plays the game in a manner
which gives non-negative expected utility to all relays, so that all relays stay
in the game.}  Fourth and finally, the relay nodes
simultaneously send their signals to the source, and the source allocates
rates and transfer payments according to the announced $g$.

In the following, we give the formal definition of the Bayesian Nash
equilibrium corresponds to the game with incomplete information described
above. Let ${\mathbf{\theta}} \triangleq (\theta_1, \ldots, \theta_n)$, ${%
\mathbf{\theta}}_{-i} \triangleq (\theta_j)_{j \neq i}$, and $s_{-i}({%
\mathbf{\theta}}_{-i}) \triangleq (s_j(\theta_j))_{j \neq i}$.

\begin{definition}
A Bayesian Nash Equilibrium of the above game is a set of strategies $\{s_1,
\ldots, s_n, g\}$ satisfying \newcounter{BNE}
\begin{list}{\bfseries\upshape \arabic{BNE}.}
{\usecounter{BNE}}
\item for each relay node $i$ and every feasible $\widetilde{s_i}: \Theta_i \rightarrow M_i$,
\begin{eqnarray}
E_{\theta_{-i}} \left\{U_i(\theta_i, g(s_i(\theta_i), s_{-i}(\theta_{-i}))) \right\} \nonumber\\
 \ge E_{\theta_{-i}} \left\{ U_i(\theta_i, g(\widetilde{s_i}(\theta_i), s_{-i}(\theta_{-i})) )\right\},
 \label{eq:relayopt}
\end{eqnarray}
\item for every feasible $\widetilde{g}: M \rightarrow R \times T$,
\begin{equation}
E_{\mathbf{\theta}} \left\{U_s(\theta_s, g(s(\mathbf{\theta})))\right\} \ge
E_{\mathbf{\theta}} \left\{ U_s(\theta_s, \widetilde{g}(s(\mathbf{\theta})) ) \right\}.
\label{eq:sourceopt}
\end{equation}
\end{list}
\end{definition}


\section{{\protect\large \textsl{Games with Limited Source Bargaining Power}}%
}

We first consider a specific instance of the general game described in
Section~\ref{sec:general_game} in which the source has limited bargaining
power. In this case, the source can only react passively to the relays'
signals. Specifically, the transfer payment from the source to any given
relay must equal the relay's signal function evaluated at the traffic rates
routed to the relay. That is, the source allocation rule is given by $%
g(m_1(\cdot), \ldots, m_n(\cdot)) = (\mathbf{r}, \mathbf{t})$, where
\begin{eqnarray}
\mathbf{r} & \in & \arg \max_{\mathbf{r}^{\prime }\in
R}W_{s}(\theta_s,r_{s})-C_{s}(\theta _{s},r^{\prime }_{0})-\displaystyle%
\sum\limits_{i = 1}^n m_{i}(r_{i}^{\prime })  \label{eq:sourcenobargain1} \\
t_i & = & m_i(r_i), \quad i = 1, \ldots, n.  \label{eq:sourcenobargain2}
\end{eqnarray}
Effectively, the relays' signal functions act as charging functions, and the
transfer payments must correspond to the relays' charges. The source can
only allocate its traffic to minimize the cost of withheld traffic plus the
total charges paid to the relays. In this case, the game can be considered
to be a \emph{pricing game}.


\subsection{{\protect\large \textsl{Pricing Game with Complete Information}}}

In this section, we consider the specific pricing game with \emph{complete
information} where the source has limited bargaining power and the vector of
relay types $\theta = (\theta_1, \ldots, \theta_n)$ is known to all nodes in
the network. Note that this is degenerate version of the game considered in
Section~\ref{sec:general_game} where the prior distribution on the type of
relay $i$ available to all nodes is given by the distribution function $%
F_i(x) = 0$ for $x < \theta_i$ and $F_i(x) = 1$ for $x \geq \theta_i$, where
$\theta_i$ is the realization of relay $i$'s type.

Since the allocation rule of the source is fixed by~(\ref%
{eq:sourcenobargain1})-(\ref{eq:sourcenobargain2}), the knowledge of $\theta$
cannot cause the source to adjust its allocation rule accordingly. Thus,
knowledge of $\theta$ is not useful to the source due to its lack of
bargaining power. Also, due to the degenerate prior distribution on $%
\theta_i $, we need only consider the usual concept of Nash equilibrium
here. We now show that in fact all the Nash equilibria in this complete
information pricing game are efficient.

\begin{theorem}
In the pricing game with complete information, Nash equilibria exist, and
all Nash equilibria are efficient. Moreover, there exists an efficient Nash
equilibrium in which each relay uses a linear charging function.
\end{theorem}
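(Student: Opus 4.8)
The plan is to prove three claims: existence of a Nash equilibrium, efficiency of every Nash equilibrium, and the existence of an efficient Nash equilibrium realized by linear charging functions.

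The plan is to prove the three assertions together, using the common marginal price $\lambda$ of the source's allocation as the bridge between the relays' charging functions and the true link marginal costs. Throughout I work with the source's response rule (\ref{eq:sourcenobargain1})--(\ref{eq:sourcenobargain2}): given differentiable charges $m_1,\dots,m_n$, the source solves a convex program, so its optimum is characterized by Kuhn--Tucker conditions stating that, for a common multiplier $\lambda$, $m_i'(r_i)=\lambda$ for every active relay ($r_i>0$), $m_i'(0)\ge\lambda$ for every inactive relay, and $c_s(\theta_s,r_0)=\lambda$ whenever $r_0>0$.

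\textbf{Efficiency of every Nash equilibrium.} Fix an equilibrium with outcome $(\mathbf{r},\mathbf{t})$ and let $\lambda$ be the associated source multiplier. The key point is that, because $M_i$ consists of arbitrary differentiable functions, relay $i$ can price-discriminate: it can perturb $m_i$ in a neighborhood of $r_i$ to shift the source's chosen quantity by a small $\delta$ while leaving the inframarginal payment essentially unchanged. I would use this to show $m_i'(r_i)=c_i(\theta_i,r_i)$ for every active relay. If $m_i'(r_i)>c_i(\theta_i,r_i)$, the relay lowers its marginal charge slightly to induce the source to route $\delta$ more units; these earn a marginal price $\approx\lambda$ and cost $c_i(\theta_i,r_i)<\lambda$, a strict gain and hence a contradiction. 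If $m_i'(r_i)<c_i(\theta_i,r_i)$, the relay raises its marginal charge to shed the last $\delta$ units, which were costing more than the $\approx\lambda$ they earned, again a strict gain. Combining $m_i'(r_i)=c_i(\theta_i,r_i)$ with the source condition $m_i'(r_i)=\lambda$ yields $c_i(\theta_i,r_i)=\lambda$ for all active relays, while the one-sided version gives $c_i(\theta_i,0)\ge\lambda$ for inactive relays; together with $c_s(\theta_s,r_0)=\lambda$ these are exactly the social-optimality conditions~(\ref{eq:optcond}). By uniqueness of $\mathbf{r}^*$, the equilibrium allocation coincides with $\mathbf{r}^*$ and is therefore efficient.

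\textbf{Existence and the linear equilibrium.} I would establish both remaining claims with a single construction: let each relay post the linear charge $m_i(r_i)=c^{*}r_i$, where $c^{*}$ is the common marginal cost at $\mathbf{r}^*$, and have the source break ties by selecting $\mathbf{r}^*$ (one of its cost-minimizing responses, since all active relays quote the same slope $c^{*}$ and $c_s(\theta_s,r_0^{*})=c^{*}$). It then remains to rule out profitable relay deviations. Raising the price strictly above $c^{*}$ loses all of relay $i$'s traffic and yields zero profit, whereas its equilibrium profit $c^{*}r_i^{*}-C_i(\theta_i,r_i^{*})\ge 0$ by convexity and $C_i(\theta_i,0)=0$. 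Undercutting to $c^{*}-\epsilon$ makes relay $i$ the uniquely cheapest path, so the source dumps a quantity $q>r_i^{*}$ on it; since $C_i$ is strictly convex, $c_i(\theta_i,q)>c^{*}-\epsilon$ on the forced excess units, the concave deviation payoff $(c^{*}-\epsilon)q-C_i(\theta_i,q)$ is decreasing past its maximizer $q_\epsilon<r_i^{*}$, and it falls strictly below the equilibrium profit. Thus the profile is a Nash equilibrium, proving existence and exhibiting an efficient linear equilibrium.

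I expect the delicate step to be the marginal-deviation argument for efficiency: making precise, within the class of differentiable charging functions, that the relay can move the source's chosen quantity by an arbitrarily small amount while preserving inframarginal payments, and handling the degenerate ``flat'' cases in which the source's optimal response is not unique (the same tie-breaking issue arising in the linear construction). I would address this by working with the source's residual cost $\Phi_i(r_i)$, the minimum cost of serving $r_s-r_i$ through the overflow link and the other relays, which is convex and decreasing with $-\Phi_i'(r_i)=\lambda$ at the source's optimum; the relay's achievable profit is then $\Phi_i(0)-\Phi_i(r_i)-C_i(\theta_i,r_i)$ up to a tie-breaking $\epsilon$, and its first-order condition reproduces $c_i(\theta_i,r_i)=\lambda$. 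Everything else---convexity of the source's program and the existence and uniqueness of $\mathbf{r}^*$---is already supplied by the model assumptions and Section~\ref{sec:efficient}.
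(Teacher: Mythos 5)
Your proposal is correct at essentially the paper's level of rigor, but it takes a genuinely different route on both halves of the theorem. For efficiency, the paper argues globally by contradiction: it posits an equilibrium allocation $\mathbf{r}\neq\mathbf{r}^*$, sorts relays into overloaded ($r_i>r_i^*$) and underloaded ($r_i<r_i^*$) groups, establishes the no-shedding condition $\int_{r_i-\delta}^{r_i} b_i(r)-c_i(r)\,dr\ge 0$ only for overloaded relays, and then lets an underloaded relay undercut the most expensive overloaded relay to steal $\epsilon$ of traffic---a pairwise exchange contradiction measured against the social marginal cost $c^*$, never invoking the source's multiplier. You instead derive pointwise first-order conditions relay by relay ($m_i'(r_i)=c_i(\theta_i,r_i)=\lambda$ for active relays, one-sided versions for inactive relays and the overflow link) and observe these are exactly the Kuhn--Tucker conditions~(\ref{eq:optcond}) of the social problem, so uniqueness of $\mathbf{r}^*$ finishes; the underlying deviations (shed the last $\delta$ units, undercut to attract $\delta$ more) are the same, but your bookkeeping via $\lambda$ avoids the sorting and yields the optimality conditions directly, while the paper's integral formulation avoids evaluating derivatives at the equilibrium point. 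For existence and the linear equilibrium, the paper simply cites~\cite{Yufang06}, whereas you construct and verify the equilibrium (all relays post the linear price $c^*$, the source breaks its indifference at $\mathbf{r}^*$), making the proof self-contained; one caution is that you explicitly check only linear deviations (raising price, undercutting), and the case of an arbitrary differentiable deviation should be handled by promoting your residual-cost remark into the argument---against linear competitors the source accepts quantity $q$ from relay $i$ only if charged at most $c^*q$, so any deviation profit is bounded by $\max_q\,[c^*q-C_i(\theta_i,q)]$, attained at $r_i^*$. Finally, both your proof and the paper's rest on the same unstated assumption that the source's re-optimization after a relay deviates moves only the intended sliver of traffic (the tie-breaking/non-uniqueness issue); you flag this explicitly and sketch a fix, which the paper does not.
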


\begin{proof}
We focus on the case for inelastic sources. The elastic case can be
similarly handled. Since $\theta = (\theta_1, \ldots, \theta_n)$ is known to
all nodes in the network, we suppress the dependence of various quantities
on $\theta$. In this game with limited source bargaining power, the relays'
signals represent charging functions. Let $B_i(r_i)$ be the charge required
by relay $i$ for forwarding traffic of rate $r_i$, and let $b_i(r_i)
\triangleq B^{\prime }_i(r_i)$ be the marginal charging function, or pricing
function. Let $C_i(r_i)$ and $c_i(r_i)$ be cost function and marginal cost
function for relay $i$, respectively.

Let the (unique) socially optimal allocation be ${\mathbf{r}}^* = (r_1^*,
r_2^*, ..., r_n^*)$. Suppose that there exists a Nash Equilibrium with
charging functions $B_i(r_i)$ and corresponding rate allocation ${\mathbf{r}}
= (r_1, r_2, ..., r_n) \neq {\mathbf{r}}^*$. With a possible re-ordering of
the relay indices, we may assume that $r_i>r_i^*$ for $i<k_1$, $r_i=r_i^*$
for $k_1 \le i < k_2$, and $r_i<r_i^*$ for $i \ge k_2$. As ${\mathbf{r}}
\neq {\mathbf{r}}^*$ and both must sum to $r_s$, $k_1>1$ and $k_2<n$.

Since ${\mathbf{r}}^*$ is the unique socially optimal allocation, from the
optimality conditions, we have
\begin{equation}
c_i(r_i^*)= c^*~\text{if}~r^*_i > 0, \quad c_i(r_i^*)> c^*~\text{if}~ r^*_i
= 0.
\end{equation}
where $c^*$ is the optimal marginal cost. Now by the strict convexity of $%
C_i(r_i)$,
\begin{equation}  \label{Maginal1}
\left\{
\begin{array}{ll}
c_i(r)> c^*~\text{for all}~r \in [r^*_i, r_i)~\text{if}~i<k_1 &  \\
c_i(r)< c^*~\text{for all}~r \in (r_i, r^*_i]~\text{if}~i>k_2 &
\end{array}
\right.
\end{equation}
The profit of relay $i$ for $i<k_1$ is
\begin{equation}
\int_0^{r_i^*} b_i(r)- c_i(r) dr + \int_{r_i^*}^{r_i} b_i(r) - c_i(r) dr.
\end{equation}
Since we are at a Nash equilibrium, for all $i < k_1$ and for any $%
0<\delta<r_i-r_i^*$, $\int_{r_i - \delta}^{r_i} b_i(r) - c_i(r) dr \geq 0$.
For otherwise, relay $i < k_1$ will deviate to another charging function
which is extremely high from $r_i^*$ to $r_i$, so as not to take the extra
traffic $r_i-r_i^*$. Now choose $\epsilon < min_{i: i < k_1~\text{or}~i \geq
k_2}|r_i-r_i^*|$. Let
\begin{equation}
\left\{
\begin{array}{ll}
m & \in \quad \arg\max_{1\le i < k_1} \int_{r_i-\epsilon}^{r_i} c_i(r) dr \\
l & \in \quad \arg\min_{k_2\le i \le n} \int_{r_i}^{r_i+\epsilon} c_i(r) dr%
\end{array}
\right.
\end{equation}
By~(\ref{Maginal1}),
\begin{equation}
\int_{r_{m}-\epsilon}^{r_m} c_m(r) dr > \int_{r_{l}}^{r_{l}+\epsilon} c_l(r)
dr
\end{equation}
However, since $\int_{r_{m}-\epsilon}^{r_{m}} b_{m}(r) - c_{m}(r) dr \geq 0 $%
, there exists a charging function $\tilde{B}_{l}(r)$ for relay $l$ such
that $\tilde{B}_{l}(r)$ equals $B_{l}(r)$ from 0 to $r_{l}$, but
\begin{equation}
\begin{array}{ll}
\int_{r_{m}-\epsilon}^{r_{m}} b_{m}(r) dr & \geq \quad
\int_{r_{m}-\epsilon}^{r_{m}} c_m(r) dr \\
& > \quad \int_{r_l}^{r_l+\epsilon} \tilde{b}_l(r) dr \quad > \quad
\int_{r_l}^{r_l+\epsilon} c_l(r) dr.%
\end{array}%
\end{equation}
Thus if relay $l$ uses $\tilde{B}_l(r)$, then in order to maximize its
profit, the source will switch an $\epsilon$ amount of traffic from relay $m$
to relay $l$. Thus, relay $l$ can deviate to $\tilde{B}_l(r)$ and get a
higher profit, contradicting our assumption of being at a Nash equilibrium.

The existence of an efficient Nash equilibrium in which relays use linear
charging functions has been demonstrated in~\cite{Yufang06}, completing the
proof.
\end{proof}

\subsection{{\protect\large \textsl{\ Pricing Game with Incomplete
Information}}}

When the source and the relays $j\neq i$ cannot observe the type $\theta
_{i} $ of relay $i$, the source and the relays must content themselves with
maximizing their expected profits. In this situation, the characterization
of Bayesian Nash Equilibria for general nonlinear charging functions is very
difficult. We limit our discussion to the case where relays bid \emph{linear}
charging functions, i.e. $B_{i}(\theta _{i},r_{i})=p_{i}(\theta _{i})r_{i}$,
where the price $p_{i}(\theta _{i})$ per unit traffic depends on the type $%
\theta _{i}$. Let $w_{i}\triangleq p_{i}^{-1}$ be the inverse function of $%
p_{i}$ such that $\theta _{i}=w_{i}(p_{i}(\theta _{i}))$. We assume that the
density $f_{i}(\theta _{i})$ is positive over $\Theta _{i}=[\underline{%
\theta }_{i},\overline{\theta }_{i}]$.

We prove the following theorem.

\begin{theorem}
If the source is inelastic, in any Bayesian Nash Equilibrium, the price
function satisfies the following differential equations:
\begin{eqnarray}
\frac{dw_{i}(p_{i})}{dp_{i}} &=&\frac{F_{i}(w_{i}(p_{i}))}{%
(n-1)f_{i}(w_{i}(p_{i}))}\Bigg\{\frac{-(n-2)r_{s}}{%
p_{i}r_{s}-C_{i}(w_{i}(p_{i}),r_{s})}  \notag \\
&&+\displaystyle\sum\limits_{j\neq i}\frac{r_{s}}{%
p_{i}r_{s}-C_{j}(w_{j}(p_{i}),r_{s})}\Bigg\},  \label{Inelastic Equation}
\end{eqnarray}
where $p_{i}(\theta _{i})$ is given by the inverse of $w_{i}(p_{i})$.

In particular, in the symmetric situation where $F_{i}(\theta _{i})=F(\theta
_{i})$ and $C_{i}(\theta _{i},r_{i})=C(\theta _{i},r_{i})$ for all $i$, the
Bayesian Nash Equilibrium satisfies:
\begin{equation}
p_{i}(\theta _{i})=\frac{1}{r_{s}}\left\{ C(\theta _{i},r_{s})-\frac{\int_{%
\underline{\theta }}^{\theta _{i}}F(\theta )^{n-1}\frac{\partial C(\theta
,r_{s})}{\partial \theta }d\theta }{F(\theta _{i})^{n-1}}\right\}.
\end{equation}
\end{theorem}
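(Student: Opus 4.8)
The plan is to reduce the game to a first-price procurement auction, write each relay's first-order condition, and then untangle the coupling between relays by an aggregation argument; the symmetric formula will follow by specializing the resulting differential equation and integrating an exact ODE.

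First I would exploit the source's limited bargaining power. Since the source is inelastic and each relay posts a linear charge $p_i r_i$, the allocation rule \eqref{eq:sourcenobargain1}--\eqref{eq:sourcenobargain2} reduces to minimizing $\sum_i p_i r_i$ over $\mathbf{r}\in R$ subject to $\sum_i r_i=r_s$. This is a linear program whose optimum is a vertex of the simplex: all of the flow $r_s$ is routed to a single cheapest relay. Hence the game is a winner-take-all first-price procurement auction — relay $i$ captures the entire flow iff $p_i<p_j$ for all $j\ne i$, earning the margin $p_i r_s - C_i(\theta_i,r_s)$, and earning zero otherwise. Assuming (as the statement implicitly does, via the existence of $w_i=p_i^{-1}$) that each equilibrium price function $p_i(\cdot)$ is strictly decreasing, a better (higher-$\theta_i$) relay posts a lower price, so opponent $j$ prices above $p$ precisely when $\theta_j<w_j(p)$, an event of probability $F_j(w_j(p))$. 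A relay of type $\theta_i$ that bids $p$ therefore expects
\begin{equation*}
\bigl(p\,r_s - C_i(\theta_i,r_s)\bigr)\prod_{j\ne i}F_j(w_j(p)).
\end{equation*}
Differentiating in $p$, setting the result to zero, dividing by the positive product, and imposing the equilibrium identity $\theta_i=w_i(p)$ gives, for every $i$, the first-order condition
\begin{equation*}
r_s + \bigl(p\,r_s - C_i(w_i(p),r_s)\bigr)\sum_{j\ne i}\frac{f_j(w_j(p))\,w_j'(p)}{F_j(w_j(p))}=0.
\end{equation*}

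The crux — and the step I expect to be the main obstacle — is that relay $i$'s condition involves the unknown derivatives $w_j'$ of \emph{all} the other relays, so the $n$ conditions form a coupled system rather than $n$ separate ODEs. I would resolve this by evaluating all $n$ conditions at a common price value $p$ and treating $a_j\triangleq f_j(w_j(p))w_j'(p)/F_j(w_j(p))$ as the unknowns. Writing $D_i\triangleq p\,r_s - C_i(w_i(p),r_s)$ and $S\triangleq\sum_j a_j$, each condition becomes $D_i(S-a_i)=-r_s$; summing over $i$ solves for $S$ in terms of the $D_i$'s, and back-substitution isolates each $a_i$ as an explicit combination of $1/D_i$ and the terms $1/D_j$, $j\ne i$. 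Recovering $w_i'=(F_i/f_i)\,a_i$ then yields the stated differential equation~\eqref{Inelastic Equation}.

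For the symmetric case I would set $F_i=F$, $C_i=C$, and $w_i=w$ for all $i$, collapse the sum, and rewrite the collapsed relation as an ODE for $p(\theta)$. Multiplying through by $F(\theta)^{n-2}$ makes it exact,
\begin{equation*}
\frac{d}{d\theta}\bigl(p(\theta)\,r_s\,F(\theta)^{n-1}\bigr)=C(\theta,r_s)\,\frac{d}{d\theta}F(\theta)^{n-1},
\end{equation*}
and integrating from $\underline{\theta}$ — where $F(\underline{\theta})^{n-1}=0$ forces the worst type to earn zero margin and thereby fixes the constant of integration — followed by a single integration by parts on the right-hand side produces the claimed closed form. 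Beyond the coupling, the gaps I would still need to close rigorously are the justification that the equilibrium price functions are strictly monotone and differentiable (so that $w_i$ is well defined and the first-order condition is both valid and sufficient, not merely necessary) and the verification of the participation/boundary behavior at $\underline{\theta}$.
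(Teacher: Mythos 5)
Your route for the general (asymmetric) equation is essentially the paper's: reduce the linear-pricing game with an inelastic source to a winner-take-all first-price procurement auction, write the log-profit first-order condition $\sum_{j\neq i} f_j(w_j(p))\,w_j'(p)/F_j(w_j(p)) + r_s/\bigl(p\,r_s - C_i(w_i(p),r_s)\bigr) = 0$, and then disentangle the coupled system. The paper compresses that last step into ``after some algebra''; your aggregation trick (set $a_j = f_j w_j'/F_j$, $D_i = p\,r_s - C_i(w_i(p),r_s)$, note each condition reads $D_i(S-a_i)=-r_s$, sum over $i$ to get $S$, back-substitute) is exactly the right way to carry it out. Where you genuinely diverge is the symmetric formula: the paper applies the envelope theorem to the value function (maximal expected profit) $v_i(\theta_i)$, imposes $v_i(\underline{\theta})=0$ since the lowest type wins with probability zero, and then solves $\pi_i = v_i$ algebraically for $p$; you instead specialize the differential equation and integrate the exact ODE $\frac{d}{d\theta}\bigl(r_s\,p(\theta)F(\theta)^{n-1}\bigr) = C(\theta,r_s)\,\frac{d}{d\theta}F(\theta)^{n-1}$. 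Both work and give the same closed form; your route is self-contained, while the paper's envelope argument needs less regularity (it does not require differentiability of the equilibrium price function, which you correctly flag as the remaining gap on your side).

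One concrete discrepancy you should not have passed over: carried out correctly, your back-substitution gives $(n-1)a_i = (n-2)r_s/D_i - \sum_{j\neq i} r_s/D_j$, i.e.
\begin{equation*}
\frac{dw_{i}}{dp} \;=\; \frac{F_{i}(w_{i}(p))}{(n-1)f_{i}(w_{i}(p))}\left\{\frac{(n-2)r_{s}}{p\,r_{s}-C_{i}(w_{i}(p),r_{s})} \;-\; \sum_{j\neq i}\frac{r_{s}}{p\,r_{s}-C_{j}(w_{j}(p),r_{s})}\right\},
\end{equation*}
which is the \emph{negative} of the right-hand side displayed in the theorem. The corrected sign is the consistent one: in the symmetric case it collapses to $w'(p) = -\,F r_s/\bigl((n-1)f\,(p r_s - C)\bigr) < 0$, as required for a strictly decreasing $w$, and it agrees with the two-relay system the paper itself writes down later, $dw_1/dp = -\,r_s F_1/\bigl((p r_s - C(w_2,r_s))f_1\bigr)$, whereas the theorem's display would put a plus sign there. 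So the displayed equation~\eqref{Inelastic Equation} contains a sign slip; your method is sound, but your assertion that the algebra ``yields the stated differential equation'' is precisely the step you did not check, and it is internally inconsistent with your own symmetric exact ODE, which corresponds to the corrected sign and does integrate (by parts, using $F(\underline{\theta})=0$ to kill the boundary term) to the stated closed form for $p(\theta_i)$.
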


\begin{proof}
By an argument similar to that in \cite{Lebrun99}, $p_{i}(\theta _{i})$
and $w_{i}(p_{i})$ are both strictly decreasing functions. Since the
charging functions are linear, the source will always allocate all its
traffic to the relay proposing the lowest price.\footnote{%
Note that since $\theta _{i}$ are continuous random variables and $%
p_{i}(\theta _{i})$ are strictly decreasing functions, the probability that
there are any ties in the relay prices is zero.} Given the other relays'
pricing strategies $w_{j}(p_{j}),j\neq i$, the probability that relay $i$
proposes the lowest price is given by%
\begin{eqnarray*}
\Pr \{p_{i} &<&p_{j}\text{ for all }j\neq i\}=\Pr \{\theta _{j}<w_{j}(p_{i})%
\text{ for all }j\neq i\} \\
&=&\displaystyle\prod\limits_{j\neq i}F_{j}(w_{j}(p_{i}))
\end{eqnarray*}
For each given private type $\theta _{i}$, relay $i$ wishes to choose its
price $p_{i}$ to maximize the expected profit%
\begin{eqnarray}
\pi _{i}(\theta _{i},p_{i}) &=&\Pr \{p_{i}<p_{j}\text{ for all }j\neq
i\}(p_{i}r_{s}-C_{i}(\theta _{i},r_{s}))  \notag \\
&=&\displaystyle\prod\limits_{j\neq
i}F_{j}(w_{j}(p_{i}))(p_{i}r_{s}-C_{i}(\theta _{i},r_{s}))
\end{eqnarray}
In order to maximize $\pi _{i}(\theta _{i},p_{i})$, the first-order
condition must be hold:
\begin{eqnarray}
\frac{\partial \ln \pi _{i}(\theta _{i},p_{i})}{\partial p_{i}} &=&%
\displaystyle\sum\limits_{j\neq i}\frac{1}{F_{j}(w_{j}(p_{i}))}%
f_{j}(w_{j}(p_{i}))\frac{dw_{j}(p_{i})}{dp_{i}}  \notag \\
&&+\frac{r_{s}}{p_{i}r_{s}-C_{i}(\theta _{i},r_{s})} \\
&=&0  \notag
\end{eqnarray}
After some algebra, we obtain (\ref{Inelastic Equation}).

We now focus on the symmetric situation for an inelastic source, where $%
F_{i}(\theta _{i})=F(\theta _{i})$ and $C_{i}(\theta _{i},r_{i})=C(\theta
_{i},r_{i})$ for all $i$. First, using an argument similar to that in~\cite%
{Lebrun99}, all the relay nodes should have the same pricing strategy $%
p(\theta _{i})$ and $w(p_{i})$. Thus, the expected profit for relay $i$ is
\begin{equation}
\pi _{i}(\theta _{i},p_{i})=F(w(p_{i}))^{n-1}(p_{i}r_{s}-C(\theta
_{i},r_{s})).  \label{eq:expectedprofit}
\end{equation}

Let the value function for relay $i$ (the maximum profit for relay $i$ given
type $\theta _{i}$ by choosing the optimal $p_{i}(\theta _{i})$) be $%
v_{i}(\theta _{i})\triangleq \max_{p_{i}}\pi _{i}(\theta _{i},p_{i})$. By
the envelope theorem,
\begin{eqnarray}
\frac{dv_{i}(\theta _{i})}{d\theta _{i}} &=&\frac{\partial
\{F(w(p_{i}))^{n-1}(p_{i}r_{s}-C(\theta _{i},r_{s}))\}}{\partial \theta _{i}}%
\Bigg|_{p_{i}=p_{i}(\theta _{i})}  \notag \\
&=&-F(w(p_{i}))^{n-1}\frac{\partial C(\theta _{i},r_{s})}{\partial \theta
_{i}}  \notag \\
&=&-F(\theta _{i})^{n-1}\frac{\partial C(\theta _{i},r_{s})}{\partial \theta
_{i}}
\end{eqnarray}

Since $p(\theta _{i})$ is decreasing, the lowest type player must win zero
expected profit, i.e., $v_{i}(\underline{\theta })=0$. Thus,
\begin{equation}
v_{i}(\theta _{i})=\int_{\underline{\theta }}^{\theta _{i}}-F(\theta )^{n-1}%
\frac{\partial C(\theta ,r_{s})}{\partial \theta }d\theta.
\label{eq:valuefunction}
\end{equation}

We now use~\eqref{eq:expectedprofit} and~\eqref{eq:valuefunction} to solve
for the optimal pricing function:
\begin{eqnarray}
p_{i}(\theta _{i}) &=&p(\theta _{i})  \notag \\
&=&\frac{1}{r_{s}}\left\{ \frac{v_{i}(\theta _{i})}{F(w(p_{i}))^{n-1}}%
+C(\theta _{i},r_{s})\right\} \\
&=&\frac{1}{r_{s}}\left\{ C(\theta _{i},r_{s})-\frac{\int_{\underline{\theta
}}^{\theta _{i}}F(\theta )^{n-1}\frac{\partial C(\theta ,r_{s})}{\partial
\theta }d\theta }{F(\theta _{i})^{n-1}}\right\}  \notag
\end{eqnarray}
\end{proof}

The case of an elastic source can be treated in a similar way.  We omit the proof
here and simply state the result.

\begin{theorem}
If the source is elastic, in any Bayesian Nash Equilibrium, the price
function satisfies the following differential equations:
\begin{eqnarray}
\frac{dw_{i}(p_{i})}{dp_{i}} &=&\frac{F_{i}(w_{i}(p_{i}))}{%
(n-1)f_{i}(w_{i}(p_{i}))}\Bigg\{\frac{-(n-2)(r_{s}-r_{0}(p_{i})}{%
p_{i}r_{s}-C_{i}(w_{i}(p),r_{s}-r_{0})}  \notag \\
&&-\frac{-\frac{dr_{0}(p_{i})}{dp_{i}}(p_{i}-\frac{\partial C_{i}(\theta
_{i},r_{s}-r_{0})}{\partial (r_{s}-r_{0})}))}{%
p_{i}r_{s}-C_{i}(w_{i}(p),r_{s}-r_{0})}  \label{Elastic Equation} \\
&&+\displaystyle\sum\limits_{j\neq i}\frac{r_{s}+\frac{dr_{s}(p_{i})}{dp_{i}}%
(p_{i}-\frac{\partial C_{i}(\theta _{i},r_{s}-r_{0})}{\partial (r_{s}-r_{0})}%
)}{p_{i}r_{s}-C_{j}(w_{j}(p),r_{s}-r_{0})}\Bigg\}  \notag
\end{eqnarray}
where $p_{i}(\theta _{i})$ is given by the inverse of $w_{i}(p_{i})$.
\end{theorem}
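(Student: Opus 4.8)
The plan is to mirror the derivation of the inelastic case (the previous theorem) step for step, inserting one extra decision variable: the rate the source chooses to withhold. First I would invoke the monotonicity argument of~\cite{Lebrun99}, exactly as before, to conclude that each $p_i(\theta_i)$ and its inverse $w_i(p_i)$ are strictly decreasing, so that ties occur with probability zero and, by the passive allocation rule~(\ref{eq:sourcenobargain1})--(\ref{eq:sourcenobargain2}), the source routes all of its \emph{admitted} traffic to the relay posting the lowest per-unit price. The probability that relay $i$ wins is therefore again $\prod_{j\neq i}F_j(w_j(p_i))$.

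The genuinely new ingredient is the source's withholding response to the winning price. Conditioned on the lowest posted price being $p_i$, an elastic source solves $\max_{r_0}\,[W_s(r_s)-C_s(\theta_s,r_0)-p_i(r_s-r_0)]$, whose interior first-order condition $c_s(\theta_s,r_0)=p_i$ defines a withholding function $r_0(p_i)$. Strict convexity of $C_s$ in $r_0$ (assumed in Section~\ref{sec:model}) makes $r_0(p_i)$ well defined and, by the implicit function theorem, differentiable, yielding $dr_0(p_i)/dp_i$. Consequently the winning relay carries $r_s-r_0(p_i)$ units and earns, conditional on winning, the profit $p_i\bigl(r_s-r_0(p_i)\bigr)-C_i\bigl(\theta_i,r_s-r_0(p_i)\bigr)$, in contrast to the constant $p_i r_s-C_i(\theta_i,r_s)$ of the inelastic case.

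I would then form relay $i$'s expected profit $\pi_i(\theta_i,p_i)=\prod_{j\neq i}F_j(w_j(p_i))\,[\,p_i(r_s-r_0(p_i))-C_i(\theta_i,r_s-r_0(p_i))\,]$ and impose the equilibrium first-order condition $\partial\ln\pi_i/\partial p_i=0$ at $\theta_i=w_i(p_i)$. Differentiating the winning probability reproduces the term $\sum_{j\neq i}f_j(w_j)\,(dw_j/dp_i)/F_j(w_j)$ exactly as before, while differentiating the conditional profit now requires the chain rule through $r_0(p_i)$ and produces the extra factor $(r_s-r_0)-\tfrac{dr_0}{dp_i}\bigl(p_i-\partial C_i/\partial(r_s-r_0)\bigr)$. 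Writing one such first-order condition for each relay and solving the resulting linear system for the single derivative $dw_i/dp_i$ (the ``some algebra'' step of the inelastic proof) should deliver the stated differential equation~(\ref{Elastic Equation}).

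The main obstacle is precisely this last differentiation and its bookkeeping. One is tempted to kill the $dr_0(p_i)/dp_i$ terms by an envelope argument, but that is illegitimate here: the envelope theorem applies to the \emph{source's} optimized value, not to the \emph{relay's} payoff, and at the relay's optimum the source's marginal withholding cost $p_i=c_s(\theta_s,r_0)$ need not coincide with the relay's marginal cost $\partial C_i/\partial(r_s-r_0)$. Hence the withholding-sensitivity terms survive and are exactly what distinguishes~(\ref{Elastic Equation}) from its inelastic counterpart. Care is also needed to evaluate every marginal cost at the common winning rate $r_s-r_0(p_i)$ and to keep the explicit $p_i$-dependence separate from the implicit dependence carried by $r_0(p_i)$; the signs emerging from the linear-system solution should be checked against the requirement that each $w_i(p_i)$ be decreasing.
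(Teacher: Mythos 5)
Your proposal is correct and takes exactly the route the paper intends: the paper omits this proof, saying only that the elastic case ``can be treated in a similar way'' to the inelastic one, and your argument is precisely that adaptation --- the monotonicity argument of \cite{Lebrun99}, winner-take-all allocation of the \emph{admitted} traffic, the withholding response $r_0(p_i)$ defined by $c_s(\theta_s,r_0)=p_i$, the log-profit first-order condition for each relay, and the linear-system solve for $dw_i/dp_i$. Your bookkeeping (conditional revenue $p_i\,(r_s-r_0(p_i))$, the surviving $dr_0/dp_i$ terms, and the correct observation that an envelope shortcut cannot eliminate them because the optimization is the source's, not the relay's) is sound; where your derivation would differ from the displayed equation --- e.g.\ $p_i(r_s-r_0)$ rather than $p_i r_s$ in the denominators, and $-dr_0/dp_i$ rather than $dr_s/dp_i$ in the summed terms --- the discrepancies are typographical slips in the paper's statement rather than gaps in your argument.
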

\subsection{{\protect\large \textsl{Efficiency Analysis}}}

In this section, we measure the inefficiency introduced by the pricing game
with incomplete information. We shall use the useful measure \emph{price of
anarchy}, defined for each given type vector $\theta $.

\begin{definition}
The price of anarchy $\rho(\theta)$ for a given type vector $\theta$ in the
incomplete information game is
\begin{equation}
\rho(\theta)= \frac{\max_{r_i \in R^E}\sum_i C_i(\theta_i, r_i)}{\min_{r_i
\in R}\sum_{i} C_i(\theta_i, r_i)}
\end{equation}
where $R^E$ is the set of all traffic allocations corresponding to Bayesian
Nash equilibria, and $R$ is the set of all feasible traffic allocations.
\end{definition}

We shall focus on the case of an inelastic source. The elastic source case
is similar. We consider the symmetric situation where $F_{i}(\theta
_{i})=F(\theta _{i})$ and $C_{i}(\theta _{i},r_{i})=C(\theta _{i},r_{i})$
for all $i$. In the case where all relays bid linear charging functions, the
highest type relay will receive all the traffic. Here, the price of anarchy
is determined by
\begin{equation}
\rho (\theta )=\frac{C(\max_{i\in I}\theta _{i},r_{s})}{\min_{{\mathbf{r}}
\in R}\displaystyle\sum\limits_{i}C(\theta _{i},r_{i})}
\end{equation}
We develop the following bound on $\rho (\theta)$.

\begin{theorem}\label{Efficiency1}
In the symmetric linear pricing game with incomplete information, if the
marginal cost function $c(\theta_i,r_i) = \frac{\partial C(\theta_i, r_i)}{%
\partial r_i}$ is concave, then $\rho (\theta) \leq n$, where $n$ is the
number of relays, with equality if and only if $c(\theta_i,r_i)$ is linear 
in $r_i$ and the relay types $\theta_i$ are all the same.
\label{thm:pa1}
\end{theorem}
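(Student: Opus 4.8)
The plan is to rewrite $\rho(\theta)$ with an explicit numerator and denominator and bound the two separately. Set $\theta^{*}\triangleq\max_{i}\theta_i$. As noted just above the theorem, in the symmetric linear pricing game the highest-type relay wins all of the traffic, so the numerator is exactly $C(\theta^{*},r_s)$, and it suffices to prove $C(\theta^{*},r_s)\le n\,D$, where $D\triangleq\min_{\mathbf{r}\in R}\sum_i C(\theta_i,r_i)$ is the socially optimal cost.

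First I would lower-bound $D$. Because $C(\theta,r)$ is strictly decreasing in $\theta$ and $\theta_i\le\theta^{*}$ for every $i$, we have $C(\theta_i,r_i)\ge C(\theta^{*},r_i)$ for each feasible $\mathbf{r}$, whence $D\ge\min_{\mathbf{r}\in R}\sum_i C(\theta^{*},r_i)$. The right-hand side is a symmetric minimization of a sum of identical strictly convex functions under $\sum_i r_i=r_s$; by symmetry and strict convexity its unique minimizer is the equal split $r_i=r_s/n$, so
\[
D\ \ge\ n\,C\!\left(\theta^{*},\tfrac{r_s}{n}\right).
\]

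The crux is the second step, where the concave-marginal-cost hypothesis enters. Writing $C(\theta^{*},r)=\int_0^{r}c(\theta^{*},s)\,ds$ and substituting $s=nu$ gives $C(\theta^{*},r_s)=n\int_0^{r_s/n}c(\theta^{*},nu)\,du$. I would then use the elementary fact that a concave function $c(\theta^{*},\cdot)$ with $c(\theta^{*},0)\ge0$ satisfies $c(\theta^{*},nu)\le n\,c(\theta^{*},u)$: this is concavity applied to the representation $u=\tfrac1n(nu)+(1-\tfrac1n)\cdot0$ together with $c(\theta^{*},0)\ge0$. Integrating over $u\in[0,r_s/n]$ yields $C(\theta^{*},r_s)\le n^{2}\,C(\theta^{*},r_s/n)$, and combining with the first step gives
\[
\rho(\theta)=\frac{C(\theta^{*},r_s)}{D}\ \le\ \frac{C(\theta^{*},r_s)}{n\,C(\theta^{*},r_s/n)}\ \le\ n.
\]

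Finally I would extract the equality conditions by tracing both inequalities. For the first bound, $D=n\,C(\theta^{*},r_s/n)$ forces the optimal allocation to be the equal split, with all rates positive, and then $C(\theta_i,r_i)=C(\theta^{*},r_i)$ on each relay together with strict monotonicity in $\theta$ forces $\theta_i=\theta^{*}$ for all $i$, i.e.\ equal types. For the second bound, equality requires $c(\theta^{*},nu)=n\,c(\theta^{*},u)$ for almost every $u$, which drives $c(\theta^{*},0)=0$ and makes $c(\theta^{*},\cdot)$ linear through the origin. The step I expect to be the main obstacle is exactly this equality analysis of the concavity inequality: since the model assumes strictly positive marginal cost, $c(\theta^{*},0)>0$ in typical cases, so the attained equality case is really the boundary situation $c(\theta^{*},r)=b\,r$, and stating the ``linear marginal cost, equal types'' characterization precisely requires this care. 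By contrast, the monotonicity-plus-symmetric-convexity lower bound on $D$ is routine.
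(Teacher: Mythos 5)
Your proof is correct, and it takes a genuinely different route from the paper's. The paper keeps the actual socially optimal fractions $a_i$ (with $r_i^{*}=a_i r_s$, $\sum_i a_i=1$), applies the concavity of $c$ per relay to get $\int_0^{a_i r_s} c(\theta_i,r)\,dr \ge a_i^{2}\int_0^{r_s} c(\theta_i,r)\,dr$, uses the cross-partial assumption $\partial^{2}C/\partial\theta_i\partial r_i\le 0$ to replace each $c(\theta_i,\cdot)$ by $c(\max_i\theta_i,\cdot)$, and finishes with the algebraic bound $1/\sum_i a_i^{2}\le n$. You instead symmetrize the denominator first: $D\ge \min_{\mathbf{r}}\sum_i C(\theta^{*},r_i)=n\,C(\theta^{*},r_s/n)$ needs only the basic monotonicity $\partial C/\partial\theta<0$ plus Jensen, and the concave-marginal hypothesis enters once, in the numerator, via $c(\theta^{*},nu)\le n\,c(\theta^{*},u)$, giving $C(\theta^{*},r_s)\le n^{2}C(\theta^{*},r_s/n)$. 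Your version is more elementary (it avoids both the cross-partial condition and the manipulation of arbitrary fractions $a_i$), while the paper's route yields the sharper intermediate bound $\rho(\theta)\le 1/\sum_i a_i^{2}$, which is informative when the optimal allocation is concentrated. Your equality analysis is also more careful than the paper's: you correctly observe that exact equality forces $c(\theta^{*},0)=0$, i.e.\ $c(\theta^{*},r)=b\,r$ linear through the origin, which sits on the boundary of the model's standing assumption $\partial C/\partial r>0$; the paper's proof asserts equality whenever $c$ is linear in $r_i$ and the types coincide, but an affine marginal cost $c=a+br$ with $a>0$ in fact gives $\rho(\theta)<n$ strictly, so the theorem's ``if and only if'' clause really requires the degenerate linear-through-zero case that you identified.
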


\begin{proof}
Let $(r_{i}^{\ast })_{i=1}^{n}=(a_{i}r_{s})_{i=1}^{n}$ be the socially
optimal allocation for a given type realization $\theta $, where $%
\sum_{i=1}^{n}a_{i}=1$ and $a_{i}\geq 0$ for all $i$. Thus the optimal cost
is
\begin{equation}
C^{\ast }=\sum_{i=1}^{n}\int_{0}^{a_{i}r_{s}}c(\theta _{i},r_{i})dr_{i}
\label{eq:optcost}
\end{equation}%
Since $c(\theta _{i},r_{i})$ is concave, it can be shown that $%
\int_{0}^{a_{i}r_{s}}c(\theta _{i},r_{i})dr_{i}\geq
a_{i}^{2}\int_{0}^{r_{s}}c(\theta _{i},r_{i})dr_{i}$, where equality holds
if and only if $c(\theta _{i},r_{i})$ is linear in $r_{i}$. Thus we have
\begin{equation}
C^{\ast }\geq \sum_{i=1}^{n}a_{i}^{2}\int_{0}^{r_{s}}c(\theta
_{i},r_{i})dr_{i}  \label{C*}
\end{equation}%

Therefore,


\begin{eqnarray}
\rho (\theta ) &=& \frac{\int_{0}^{r_{s}}c(\max_{i}\theta _{i},r_{i})dr_{i}}{%
\sum_{i=1}^{n}\int_{0}^{a_{i}r_{s}}c(\theta _{i},r_{i})dr_{i}} \nonumber\\
&\leq& \frac{\int_{0}^{r_{s}}c(\max_{i}\theta _{i},r_{i})dr_{i}}{%
\sum_{i=1}^{n}a_{i}^{2}\int_{0}^{r_{s}}c(\theta _{i},r_{i})dr_{i}} \\
&\leq &\frac{\int_{0}^{r_{s}}c(\max_{i}\theta _{i},r_{i})dr_{i}}{%
\sum_{i=1}^{n}a_{i}^{2}\int_{0}^{r_{s}}c(\max_{i}\theta _{i},r_{i})dr_{i}}
=\frac{1}{\sum_{i=1}^{n}a_{i}^{2}}\leq n \nonumber
\end{eqnarray}
where the second inequality follows from the assumption that
${\partial ^{2}{C_i(\theta
_{i},r_{i})}}/{\partial {\theta _{i}}\partial{r_{i}}}\leq 0$.
Equality obtains in all three previous inequalities if $c(\theta_i,r_i)$ is linear 
in $r_i$ and the relay types $\theta_i$ are all the same.
\end{proof}

Next, we give a general bound on the price of anarchy for all cost functions
satisfying our assumptions in Section~\ref{model}. 

\begin{theorem}\label{Efficiency2}
In the symmetric linear pricing game with incomplete information, let the
support set for each $\theta _{i}$ be $\Theta \triangleq \lbrack \underline{%
\theta },\overline{\theta }]$. If the marginal cost function $c(\theta
_{i},r_{i})=\frac{\partial C(\theta _{i},r_{i})}{\partial r_{i}}$ satisfies $%
\frac{c(\underline{\theta },r_{s})}{c(\overline{\theta },0)}\leq k$ for some
constant $k$, then $\rho (\theta )\leq k$.
\end{theorem}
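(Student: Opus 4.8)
The plan is to reuse the representation of the price of anarchy as a ratio of marginal-cost integrals already established in the proof of Theorem~\ref{Efficiency1}, and then to bound the numerator from above and the denominator from below using \emph{only} the monotonicity properties of $c$ assumed in Section~\ref{model}. As before, let the socially optimal allocation be $(a_i r_s)_{i=1}^n$ with $\sum_{i=1}^n a_i = 1$ and $a_i \ge 0$, so that since $C(\theta,0)=0$,
\begin{equation}
\rho(\theta) = \frac{\int_{0}^{r_{s}} c(\max_i \theta_i, r_i)\, dr_i}{\sum_{i=1}^{n} \int_{0}^{a_i r_s} c(\theta_i, r_i)\, dr_i}.
\end{equation}

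For the numerator, I would first use the convexity of $C$ in $r$ (equivalently, that $c$ is increasing in $r$) to replace the integrand by its largest value $c(\max_i \theta_i, r_s)$, and then use the cross-partial assumption ${\partial^2 C_i}/{\partial \theta_i\, \partial r_i} \le 0$ --- which makes the marginal cost $c$ nonincreasing in $\theta$ --- together with $\max_i \theta_i \ge \underline{\theta}$ to obtain $c(\max_i \theta_i, r_s) \le c(\underline{\theta}, r_s)$. This yields the clean bound $\int_{0}^{r_s} c(\max_i \theta_i, r_i)\, dr_i \le r_s\, c(\underline{\theta}, r_s)$.

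For the denominator, the same two monotonicities are applied in the opposite direction. Since $c$ is increasing in $r$, I would bound each integrand below by its value at $r_i = 0$, and since $c$ is nonincreasing in $\theta$ with $\theta_i \le \overline{\theta}$, I would further use $c(\theta_i, 0) \ge c(\overline{\theta}, 0)$. Summing over $i$ and invoking $\sum_{i=1}^n a_i = 1$ then gives $\sum_{i=1}^{n} \int_{0}^{a_i r_s} c(\theta_i, r_i)\, dr_i \ge r_s\, c(\overline{\theta}, 0)$. Dividing the two bounds produces $\rho(\theta) \le c(\underline{\theta}, r_s) / c(\overline{\theta}, 0) \le k$, which is the claim.

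There is no genuine analytic obstacle here: unlike Theorem~\ref{Efficiency1}, no concavity of $c$ and no Jensen-type argument are required, since every step is a direct pointwise comparison of the integrands. The only point requiring care is to invoke the correct monotonicity at each stage --- in particular, to read off from ${\partial^2 C_i}/{\partial \theta_i\, \partial r_i} \le 0$ that $c$ is nonincreasing in $\theta$, so that the worst case for the numerator sits at the smallest type $\underline{\theta}$ and the best case for the denominator sits at the largest type $\overline{\theta}$. Pairing these extreme types with the extreme rates ($r_s$ in the numerator, $0$ in the denominator) is exactly what generates the ratio $c(\underline{\theta}, r_s) / c(\overline{\theta}, 0)$ appearing in the hypothesis.
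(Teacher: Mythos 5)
Your proof is correct and follows essentially the same route as the paper: both bound the numerator above by $r_s\,c(\underline{\theta},r_s)$ and the denominator below by $r_s\,c(\overline{\theta},0)$ via pointwise monotonicity of $c$ in $r$ (convexity of $C$) and in $\theta$ (the cross-partial assumption), then take the ratio. Your write-up is in fact slightly cleaner, since the paper's displayed chain $c(\theta_i,r_i)\geq c(\theta_i,0)\geq c(\max_i\theta_i,r_i)\geq c(\overline{\theta},0)$ contains what appears to be a typo (the middle term should be $c(\max_i\theta_i,0)$), which your argument implicitly corrects.
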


\begin{proof}  Since $C(\theta _{i},r_{i})$ is convex in $r_i$ and 
${\partial ^{2}{C_i(\theta
_{i},r_{i})}}/{\partial {\theta _{i}}\partial{r_{i}}}\leq 0$ by assumption,
$c(\theta _{i},r_{i}) \geq c(\theta _{i},0) \geq c(\max_i \theta _{i},r_{i})
\geq c(\overline{\theta},0)$.   Also $c(\max_i \theta _{i},r_{i}) \leq 
c(\underline{\theta}, r_s)$.   Thus,  $\sum_{i=1}^{n}\int_{0}^{a_{i}r_{s}}c(\theta _{i},r_{i})dr_{i}$
$\geq c(\overline{\theta},0)r_s$, and 
$\int_{0}^{r_{s}}c(\max_{i}\theta _{i},r_{i})dr_{i}$ $\leq c(\underline{\theta}, r_s) r_s$. 
The result follows.
\end{proof}

Recall our result that all Nash equilibria in the complete information
pricing game are efficient, including any which results from linear pricing.
Thus, we see that incomplete information can introduce inefficiencies. The
main insight is that in an incomplete information pricing game, the relays
cannot calculate the socially optimal traffic allocation due to the lack of
information regarding types. Therefore, the relays cannot bid the marginal
cost at the socially optimal outcome as the price, Thus, the game cannot
reach an efficient Nash Equilibrium.

Although Bayesian Nash Equilibria are not efficient in the symmetric linear
pricing game with incomplete information, they satisfy an asymptotic
efficient property: the outcome of the Bayesian Nash Equilibrium when $r_{s}$
goes to zero is efficient. To see this, note that by \cite{Lebrun99}, all
the relay pricing functions in the symmetric case are the same and
decreasing. Thus the highest type relay will always get all the traffic.
When $r_{s}$ goes to zero, the efficient allocation also allocates all the
traffic to the highest type relay. Thus, in the symmetric case, a Bayesian
Nash Equilibrium is efficient when $r_{s}$ goes to zero.

We now show, however, that in the asymmetric linear pricing game with
incomplete information, the Bayesian Nash Equilibria are not efficient even
when $r_{s}$ goes to zero. We focus on the case of two relays, where the
cost functions of the relays are identical, but the distributions of the
types are different. Using Theorem 2, we obtain the following differential
equations:
\begin{eqnarray*}
\frac{dw_{1}}{dp} &=&-\frac{r_{s}F_{1}(w_{1}(p))}{%
(pr_{s}-C(w_{2},r_{s}))f_{1}(w_{1})} \\
\frac{dw_{2}}{dp} &=&-\frac{r_{s}F_{2}(w_{2}(p))}{%
(pr_{s}-C(w_{1},r_{s}))f_{2}(w_{2})}
\end{eqnarray*}
Explicitly solving for the solution is difficult, but we can observe some
properties of the solution. First, we must have
\begin{equation*}
p_{1}(\overline{\theta })=p_{2}(\overline{\theta })=p_{\min }.
\end{equation*}
This is because if the relay prices for the highest type are not the same,
then the relay with the higher price will lower its price to increase its
probability of winning the game, thus increasing the expected revenue. From
the differential equations, we obtain
\begin{equation}
w_{1}(p)=\overline{\theta }-\int_{p_{\min }}^{p}\frac{r_{s}F_{1}(w_{1}(p))}{%
(pr_{s}-C(w_{2}(p),r_{s}))f_{1}(w_{1}(p))}dp
\end{equation}
\begin{equation}
w_{2}(p)=\overline{\theta }-\int_{p_{\min }}^{p}\frac{r_{s}F_{2}(w_{2}(p))}{%
(pr_{s}-C(w_{1}(p),r_{s}))f_{2}(w_{2}(p))}dp
\end{equation}

For a given $p$, let $\theta_1$ and $\theta_2$ be such that $p_{1}(\theta
_{1})=p_{2}(\theta _{2})=p$. From the above equations, it is clear that $%
w_{1}(p) \neq w_{2}(p)$, i.e. $\theta _{1}\neq \theta _{2}$. Therefore, we
have a situation where two relays with different type propose the same
price. When this realization occurs, the highest type relay does not carry
all the traffic, even when $r_{s}$ goes to zero. Thus, in the asymmetric
case, the Bayesian Nash Equilibrium is not asymptotic efficient as $r_{s}$
goes to zero.

\section{{\protect\large \textsl{Games with Full Source Bargaining Power}}}

In the discussion thus far, the source has limited bargaining power, and
passively reacts to the relays' signals, which are equivalent to charging
functions. The source can only allocate its traffic to minimize its cost in
withheld traffic plus the total transfer payment to the relays. In this
section, we examine the scenario where the source has full bargaining power,
in the sense that the contract announced by the source is not limited to the
one described in~\eqref{eq:sourcenobargain1}-\eqref{eq:sourcenobargain2}. We first
investigate the (Bayesian) Nash equilibria which can result from games
with source bargaining power in the case of complete information.   Here,
we show that all (Bayesian) Nash equilibria are efficient.  Then,
we proceed to the case of incomplete information, and characterize the potential
inefficiencies associated with that case.

\subsection{{\protect\large \textsl{Games with Complete Information}}}

In a game with source bargaining power and complete information, the source
can observe the type vector $\theta = (\theta_1, \ldots, \theta_n)$ of the
relays, and then design the allocation map $g$ according to $\theta$. Since
the type $\theta_i$ is not private to relay $i$, relay $i$ cannot manipulate
this information in designing its signalling strategy $s_i$. Since the
source can observe $\theta$, it can effectively ignore the strategies of the
relays in designing $g$. Nevertheless, the source needs to ensure that the
relays will accept its proposed contract and stay in the game. The latter
will hold as long as $U_i(\theta_i, g(\theta)) = t_i - C_i(\theta_i, r_i)
\geq 0$ for all $i$. That is, all relays receive non-negative utility by
accepting the contract proposed by the source, and therefore are willing to
participate in the game.

\begin{lemma}
In any (Bayesian) Nash Equilibrium of the complete information game with
source bargaining power, all relays receive zero utility. \label{lemma:zero}
\end{lemma}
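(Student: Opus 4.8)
The plan is to prove that in any (Bayesian) Nash equilibrium of the complete information game with source bargaining power, every relay $i$ receives exactly zero utility, i.e.\ $t_i - C_i(\theta_i, r_i) = 0$. The setup already guarantees $U_i \geq 0$ (the individual rationality / participation constraint noted just before the lemma), so the entire content is to rule out $U_i > 0$ for any relay. The natural strategy is a contradiction argument driven by the source's optimization: if some relay extracted strictly positive surplus, the source could redesign its contract to shave off that surplus and strictly improve its own utility, contradicting equilibrium.

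First I would fix an equilibrium $\{s_1,\dots,s_n,g\}$ with resulting allocation and payments $(\mathbf{r},\mathbf{t}) = g(s(\theta))$, and suppose for contradiction that $U_k(\theta_k, g(\theta)) = t_k - C_k(\theta_k, r_k) = \varepsilon > 0$ for some relay $k$. Because the information is complete, the source knows $\theta$ and hence knows $C_k(\theta_k, r_k)$ exactly; I would exhibit a deviation $\widetilde{g}$ for the source that keeps the allocation $\mathbf{r}$ and all other payments $t_j$ ($j \neq k$) identical, but reduces $t_k$ to $\widetilde{t}_k \triangleq t_k - \varepsilon/2 = C_k(\theta_k,r_k) + \varepsilon/2$. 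Under $\widetilde{g}$, relay $k$ still receives $\widetilde{t}_k - C_k(\theta_k, r_k) = \varepsilon/2 > 0 \geq 0$, so the participation constraint continues to hold and relay $k$ stays in the game; meanwhile the source's utility $U_s = W_s(r_s) - C_s(\theta_s, r_0) - \sum_i t_i$ strictly increases by $\varepsilon/2$ since the total transfer dropped. This violates the source-optimality condition \eqref{eq:sourceopt}, giving the contradiction.

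The step I expect to be the real obstacle is making precise that $\widetilde{g}$ is a legitimate deviation within the source's strategy space and that the relays' equilibrium \emph{signals} $s_{-k}$ do not react in a way that undermines the improvement. Since the lemma is about a fixed equilibrium and the source's best-response condition holds against the \emph{given} relay strategies $s(\theta)$, the deviation $\widetilde g$ is evaluated against those same fixed strategies, so no relay re-optimizes in response; what must be checked is only that $\widetilde g$ is feasible, i.e.\ that $\widetilde g : M \to R\times T$ with $\widetilde t_k \geq 0$. The latter holds because $\widetilde t_k = C_k(\theta_k,r_k) + \varepsilon/2 > 0$. A subtlety worth addressing is whether the source can target the specific signal profile $s(\theta)$ when lowering the payment; because the source observes $\theta$ directly under complete information, it can define $\widetilde g$ to agree with $g$ everywhere except at the realized signal vector $m = s(\theta)$, where it lowers the $k$-th transfer — this is enough since equilibrium optimality is evaluated at the realized play. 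Once feasibility is secured, the strict gain $\varepsilon/2$ in $U_s$ closes the argument, so $U_i = 0$ for every $i$.
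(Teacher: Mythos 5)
Your proof is correct and follows essentially the same argument as the paper: assume some relay earns strictly positive utility, then have the source deviate to a contract that keeps the allocation and all other transfers fixed while shaving a small amount off that relay's payment (keeping it strictly positive so participation is preserved), which strictly raises the source's utility and contradicts equilibrium. Your added care about the deviation $\widetilde{g}$ being evaluated against the fixed relay signal strategies is a minor refinement the paper leaves implicit, but the route is the same.
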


\begin{proof}
Suppose that there exists a (Bayesian) Nash Equilibrium where the source
allocation rule
\begin{equation*}
g(m_1(\cdot), ... , m_n(\cdot)) = ({\mathbf{r}}, {\mathbf{t}})
\end{equation*}
is such that $U_i(\theta_i, r_i, t_i) = t_i - C_i(\theta_i, r_i) > 0$ for
some $i$. Since the source can observe $\theta$, it could select another
allocation rule $g^{\prime }(m_1(\cdot), ... , m_n(\cdot)) = ({\mathbf{r}}%
^{\prime }, {\mathbf{t}}^{\prime })$ such that
\begin{equation*}
r^{\prime }_i = r_i, i =1,\ldots,n; \quad t^{\prime }_i = t_i-\epsilon, \;\;
t^{\prime }_j = t_j~\text{for all}~j \neq i
\end{equation*}
where $\epsilon$ is small enough so that $t^{\prime }_i - C_i(\theta_i,
r^{\prime }_i) > 0$. Note that the set of relays which would opt to accept
contract $g$ and stay in the game is the same as the set for contract $%
g^{\prime }$. On the other hand, by shifting its allocation rule from $g$ to
$g^{\prime }$, the source has strictly decreased its total transfer payment,
while keeping the same traffic allocation. Thus, the source's utility is
strictly increased. This contradicts our assumption of being at a Nash
equilibrium.
\end{proof}

\begin{theorem}
In the complete information game with source bargaining power, all
(Bayesian) Nash equilibria are efficient. \label{thm:completebargain}
\end{theorem}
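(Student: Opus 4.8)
The plan is to show that any (Bayesian) Nash equilibrium of the complete information game with source bargaining power must implement the socially optimal traffic allocation $\mathbf{r}^*$. By Lemma~\ref{lemma:zero}, I already know that in any such equilibrium every relay receives exactly zero utility, i.e. $t_i = C_i(\theta_i, r_i)$ for all $i$. This converts the source's utility into a clean expression that I can analyze directly.

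First I would substitute the zero-utility condition into the source's objective. Since $t_i = C_i(\theta_i, r_i)$ at equilibrium, the source's utility becomes
\begin{equation*}
U_s = W_s(r_s) - C_s(\theta_s, r_0) - \sum_{i=1}^n C_i(\theta_i, r_i).
\end{equation*}
Because the source observes $\theta$ and has full bargaining power, it is free to choose any feasible allocation $\mathbf{r} \in R$ (setting transfer payments to $t_i = C_i(\theta_i, r_i)$ keeps every relay at zero utility and hence willing to participate). Therefore, maximizing the source's equilibrium utility over feasible contracts is equivalent to
\begin{equation*}
\max_{\mathbf{r} \in R} \left\{ W_s(r_s) - C_s(\theta_s, r_0) - \sum_{i=1}^n C_i(\theta_i, r_i) \right\},
\end{equation*}
which, since $W_s(r_s)$ is constant, is exactly the minimization defining the socially optimal allocation. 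Thus the equilibrium allocation must minimize the total network cost, i.e. it must equal $\mathbf{r}^*$, which is unique by the strict convexity of the cost functions.

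The one point requiring care — and the main obstacle — is establishing the converse direction of the optimality argument rigorously: I must verify that if the source's equilibrium contract did \emph{not} induce $\mathbf{r}^*$, then the source has a profitable deviation, contradicting the equilibrium assumption. The deviation is to announce a contract that routes traffic according to $\mathbf{r}^*$ with payments $t_i^* = C_i(\theta_i, r_i^*)$. I would argue that this deviation keeps all relays at zero utility (so the participation set is unchanged and all relays stay in), while strictly lowering the source's total cost whenever the original allocation differs from $\mathbf{r}^*$, by uniqueness of the social optimum. This strictly increases the source's utility, contradicting equilibrium. The only subtlety is confirming that such a deviating contract is admissible within the general allocation-map framework of Section~\ref{sec:general_game} and that the relays' best responses do not undo it — but since the source can condition $g$ on the fully observed $\theta$ and the signals carry no additional private information in the complete-information setting, the source effectively dictates the outcome, closing the argument.
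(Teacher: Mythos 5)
Your proof is correct and follows essentially the same route as the paper's: invoke Lemma~\ref{lemma:zero} to pin the transfer payments at $t_i = C_i(\theta_i, r_i)$, observe that the source's equilibrium utility then equals $W_s(r_s)$ minus the total network cost, and conclude that equilibrium forces the allocation to be the unique social minimizer $\mathbf{r}^*$. Your explicit deviation-to-$\mathbf{r}^*$ argument (with payments $t_i^* = C_i(\theta_i, r_i^*)$ preserving participation) merely makes rigorous the step the paper compresses into its final ``Thus'' sentence.
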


\begin{proof}
At any Nash equilibrium, the source maximizes its utility
\begin{equation*}
{U}_s(\theta_s, g(s_1(\theta_1), \ldots, s_n(\theta_n)) = W_s(\theta_s, r_s)
- C_s(\theta_s, r_0) - \sum_{i=1}^n t_i.
\end{equation*}
By Lemma~\ref{lemma:zero}, at the equilibrium, we have $t_i = C_i(\theta_i,
r_i)$ for all $i$. Thus, the traffic allocation by the source minimizes $%
C_s(\theta_s, r_0) + \sum_{i=1}^n C_i(\theta_i, r_i)$, and therefore the
equilibrium is efficient.
\end{proof}

Using Lemma~\ref{lemma:zero} and Theorem~\ref{thm:completebargain}, we can
easily solve for the Nash equilibrium of the complete information game with
source bargaining power. By Theorem~\ref{thm:completebargain}, the source
allocation rule at the equilibrium may be obtained by solving for the
socially optimal traffic allocation ${\mathbf{r}}^*$, where ${\mathbf{r}}^*
= \arg \max_{{\mathbf{r}} \in R} C_s(\theta_s, r_0) + \sum_{i=1}^n
C_i(\theta_i, r_i)$. As noted in Section~\ref{sec:efficient}, due to the
strict convexity of the optimization problem, ${\mathbf{r}}^*$ exists and is
unique. By Lemma~\ref{lemma:zero}, at the equilibrium, the transfer payment $%
t_i = C_i(\theta_i, r^*_i)$ for every $i = 1, \ldots, n$.\footnote{%
Recall that $C_i(\theta_i, 0) = 0.$} For the relays, any feasible signal map
$s_i$ may be chosen for the equilibrium.

To see why this constitutes an equilibrium, note the following sequence of
events in the game with source bargaining power. First, each relay $i$
observes its type $\theta_i$. Second, the source provides the contract $g: M
\rightarrow ({\mathbf{r}^*}, {\mathbf{t}})$, where ${\mathbf{r}}^*$ is the
socially optimal traffic allocation, and $t_i = C_i(\theta_i, r^*_i)$ for
every $i$. Note that $g$ is independent of the signals sent by the relays.
Third, the relays accept the mechanism because they each receive zero
utility, and therefore are indifferent with respect to carrying traffic or
not. Fourth and finally, the relay nodes will play signal map $s_i$ without
deviation, since the source allocation map is independent of the relays'
signals. Thus, the Nash equilibrium holds and is unique.

\subsection{{\protect\large \textsl{Games with Incomplete Information}}}

We now turn to the case that source cannot observe the type of relay. Thus
the relay nodes can manipulate their types in order to get more utility, and the
source can no longer design the allocation according to $\theta $.  As in
incomplete information games without bargaining power, the source must
maximize the expectation of profits according to the signals sent by relays.
The characterization of Bayesian Nash Equilibria for this case is very
difficult due to the complexity of the strategy set and the possible
behaviors of source and relays. Nevertheless, we devise a method for
characterizing outcomes corresponding to the Bayesian Nash Equilibria which
avoids the difficulty of calculating the the equilibria explicitly. We shall
do this in two steps. First, we show that if a resource allocation outcome
can be realized by a Bayesian Nash equilibrium for a game with source
bargaining in which every relay receives non-negative expected utility, then
there exists a ``truth telling" Bayesian Nash equilibrium that realizes the
outcome. Second, we show that the set of outcomes for the ``truth telling" Bayesian
Nash equilibria is included in the set of outcomes for the Nash equilibria for a \emph{%
complete information game}, in which the link cost functions are replaced by
a specified ``virtual cost functions."

\begin{definition}
A Bayesian Nash Equilibrium of the game with bargaining power is truth
telling if $M=\Theta $ \ and every relay node is willing to report their
true type to the source node.
\end{definition}

\begin{theorem}
\label{thm:revelation} If a resource allocation outcome $f$ can be realized
by a Bayesian Nash Equilibrium of the game with source bargaining power, in
which every relay receives non-negative expected utility, then there exists
a truth telling Bayesian Nash Equilibrium which realizes $f$.
\end{theorem}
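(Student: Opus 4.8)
The plan is to apply the revelation principle: starting from a Bayesian Nash equilibrium $\{s_1,\ldots,s_n,g\}$ that realizes the outcome $f$, I would absorb the relays' equilibrium signaling maps into the source's allocation rule to obtain a direct mechanism. Concretely, take $M=\Theta$, let each relay's strategy be the identity map (report its true type), and define the source's allocation rule by $g^{TT}(\theta_1,\ldots,\theta_n)\triangleq g(s_1(\theta_1),\ldots,s_n(\theta_n))$. Since $f(\theta)=g(s(\theta))$ by the definition of the network allocation map, truth-telling under $g^{TT}$ reproduces $f$ exactly; hence the ``realizes $f$'' conclusion is immediate and the entire task reduces to checking that the truth-telling profile is itself a Bayesian Nash equilibrium.

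The core step is the relay incentive-compatibility condition \eqref{eq:relayopt}. Fixing relay $i$ of type $\theta_i$ and holding the other relays to truthful reporting, any misreport $\widehat\theta_i$ produces under $g^{TT}$ the outcome $g(s_i(\widehat\theta_i),s_{-i}(\theta_{-i}))$, which is exactly what relay $i$ would have obtained in the original game by replacing $s_i$ with the feasible signal map that sends its type to $s_i(\widehat\theta_i)$. Because $s_i$ already satisfied \eqref{eq:relayopt} against $s_{-i}$ and $g$, no such substitution can raise relay $i$'s expected utility, so truthful reporting is a best response. The participation requirement transfers for free: under truthful reporting each relay's expected utility equals its expected utility in the original equilibrium, which is non-negative by hypothesis, so every relay still accepts the contract.

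The delicate and, I expect, hardest step is the source-optimality condition \eqref{eq:sourceopt}. Here I would try to mirror each candidate source deviation $\widetilde g^{TT}:\Theta\to R\times T$ in the direct mechanism by a deviation $\widetilde g:M\to R\times T$ in the original game satisfying $\widetilde g\circ s=\widetilde g^{TT}$ on the support of $s$; the original source-optimality inequality \eqref{eq:sourceopt} would then bound the deviation's payoff by that of $g^{TT}$, since both sides have equal equilibrium value $E_\theta[U_s(\theta_s,f(\theta))]$. This factoring is immediate when the equilibrium signaling maps $s_i$ are injective, because the source can recover the signal profile from the reported type profile and thus has no finer information than before. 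The obstacle is the pooling case, in which truthful reporting reveals strictly more than $s(\theta)$ does and so enlarges the source's deviation set; I would handle this by noting that any type profiles pooled by $s$ yield identical outcomes under $f=g^{TT}$, and by arguing from the inherited participation constraints that the source cannot convert this finer information into a strict gain, so that $g^{TT}$ remains a best response and the constructed profile is a truth-telling Bayesian Nash equilibrium realizing $f$.
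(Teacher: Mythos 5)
Your construction ($M=\Theta$, $g^{TT}=g\circ s$, truthful reporting) and your relay-side argument coincide exactly with the paper's proof: the paper establishes \eqref{eq:truth1} by the same misreport-as-deviation device (a profitable lie $\widetilde{\theta}_i$ would yield a feasible signal map $s_i'$ with $s_i'(\theta_i)=s_i(\widetilde{\theta}_i)$, contradicting \eqref{eq:relayopt}), and your participation observation is fine. The divergence is on source optimality, and there your diagnosis is better than your cure. The paper asserts \eqref{eq:truth2} with no argument at all; you correctly isolate this as the hard step and correctly observe that pooling enlarges the source's deviation set. But your claimed resolution --- that the source ``cannot convert this finer information into a strict gain'' --- is false under the paper's literal definition of Bayesian Nash equilibrium, in which the source's deviation in \eqref{eq:sourceopt} is evaluated holding the relays' strategies fixed. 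Counterexample: one relay, inelastic source, so the allocation is forced and only the transfer matters. Let the relay send the same signal for every type and let the source pay $\bar{t}=C(\underline{\theta},r_s)$; this is a Bayesian Nash equilibrium (the source cannot pay less while keeping every type's utility non-negative), it satisfies the hypothesis of the theorem, and its outcome $f$ is constant in $\theta_1$. In the truthful direct mechanism realizing $f$, the source can deviate to $\widetilde{f}(\theta_1)=(r_s,\,C(\theta_1,r_s))$: with truthful reporting held fixed, this strictly lowers the expected payment while leaving every type exactly zero utility. So the truth-telling profile is not an equilibrium in the literal sense (indeed, on this example the theorem itself fails under that reading), and no argument of the kind you sketch can rescue it.

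What actually closes the gap --- and what the paper implicitly does --- is to change what a source deviation means: a deviation to another mechanism must be evaluated with the relays re-optimizing their reports against it; equivalently, \eqref{eq:truth2} is a maximization subject to the incentive constraints \eqref{eq:truth1} and the participation constraints \eqref{eq:nonnegative}. Under that sequential, mechanism-design reading, the cost-matching deviation above is worthless (every type would report $\underline{\theta}$), the transfer of source optimality from the original game goes through, and the resulting constrained problem is precisely the Myerson-style optimization that the paper feeds into Theorem~\ref{thm:virtual}. In summary: your relay-side proof matches the paper's; your source-side patch has a genuine gap --- one the paper shares by silence --- and the repair is not a cleverer pooling argument but a restatement of the source's equilibrium condition as optimality among incentive-compatible, individually rational direct mechanisms rather than pointwise optimality against fixed truthful reports.
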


\begin{proof}
Suppose there is a Bayesian Nash Equilibrium which realizes the allocation
outcome $f(\theta )$. By the definition of the Bayesian Nash Equilibrium, we
have~\eqref{eq:relayopt} and~\eqref{eq:sourceopt}. Now observe that by~%
\eqref{eq:relayopt}, we must have
\begin{eqnarray}
\theta _{i} &\in &\arg \max_{\widetilde{\theta _{i}}}E_{\theta _{-i}}\left\{
U_{i}(\theta _{i},g(s_{i}(\widetilde{\theta _{i}}),s_{-i}(\theta
_{-i})))\right\}
\end{eqnarray}
for all $i$.  Otherwise, if there exists some $\theta ^{\prime }$ such that $E_{\theta
_{-i}}\left\{ U_{i}(\theta _{i},g(s_{i}(\theta _{i}^{\prime }),s_{-i}(\theta
_{-i})))\right\}$ $>E_{\theta _{-i}}\left\{ U_{i}(\theta _{i},g(s_{i}(\theta
_{i}),s_{-i}(\theta _{-i})))\right\} $, then there is another strategy $%
s_{i}^{\prime }(\theta )$ satisfying $s_{i}^{\prime }(\theta
_{i})=s_{i}(\theta _{i}^{\prime })$ and $s_{i}^{\prime }(\theta
)=s_{i}(\theta )$ for all $\theta \neq \theta _{i}$, such that $E_{\theta
_{-i}}\left\{ U_{i}(\theta _{i},g(s_{i}^{\prime }(\theta _{i}),s_{-i}(\theta
_{-i})))\right\} $ $>E_{\theta _{-i}}\left\{ U_{i}(\theta
_{i},g(s_{i}(\theta _{i}),s_{-i}(\theta _{-i})))\right\} $, violating~%
\eqref{eq:relayopt}. Therefore, since $g(s_{1}(\theta _{1}),\ldots
,s_{n}(\theta _{n}))=f(\theta )$, we have
\begin{eqnarray}
\theta _{i} &\in &\arg \max_{\widetilde{\theta _{i}}\in \Theta
_{i}}E_{\theta _{-i}}\left\{ U_{i}(\theta _{i},f(\widetilde{\theta _{i}}%
,\theta _{-i}))\right\} \;\;\mbox{for~all~$i$} \;\;  \label{eq:truth1} \\
f &\in &\arg \max_{\widetilde{f}}E_{\theta }\left\{ U_{s}(\theta _{s},%
\widetilde{f}(\theta ))\right\} .  \label{eq:truth2}
\end{eqnarray}%
Thus, there exists a direct truth telling Bayesian Nash Equilibrium with the
outcome $f(\theta )$.
\end{proof}

Theorem \ref{thm:revelation} says that the set of outcomes corresponding to
Bayesian Nash Equilibria for the game with source bargaining power and
incomplete information is a subset of the outcomes corresponding to truth
telling Bayesian Nash Equilibria, in which each relay proposes its type
truthfully to the source, and the source optimally allocates rates according
to the relays' types. This finding simplifies our analysis considerably,
since we can now focus on the truth telling Bayesian Nash Equilibria in
order to bound the efficiency loss introduced by incomplete information in
games with source bargaining power.

We now investigate the outcomes which can be realized by truth telling
Bayesian Nash Equilibria. Notice that these equilibria correspond to the
solutions of the optimization problem given by~\eqref{eq:truth1} and~%
\eqref{eq:truth2}, in addition to the non-negative expected utility
constraint
\begin{equation}
E_{\theta _{-i}}\left\{ U_{i}(\theta _{i},r_{i})\right\} =E_{\theta
_{-i}}\left\{ t_{i}(\mathbf{\theta },\mathbf{r})-C_{i}(\theta
_{i},r_{i}(\theta ))\right\} \geq 0
\label{eq:nonnegative}
\end{equation}%
for all $i$, and feasibility constraint $\mathbf{r}\in R$.

\begin{theorem}
The set of solutions for the optimization problem defined by~%
\eqref{eq:truth1}-\eqref{eq:nonnegative} is included in the set of outcomes
corresponding to the Nash equilibria for the complete information game in
which the link cost functions $C_i(\theta_i, r_i)$ are replaced by
\begin{equation}
J_{i}(\theta _{i},r_{i})=C_{i}(\theta _{i},r_{i})-\frac{1-F_{i}(\theta _{i})%
}{f_{i}(\theta _{i})}\frac{\partial C_{i}(\theta _{i},r_{i})}{\partial
\theta _{i}}.  \label{Virtual Cost}
\end{equation}
\label{thm:virtual}
\end{theorem}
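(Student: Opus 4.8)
The plan is to run the standard mechanism-design reduction (as in Myerson's optimal-auction analysis) adapted to this cost-minimization setting, converting the source's constrained Bayesian optimization into an unconstrained \emph{pointwise} minimization of expected virtual cost, and then invoking Theorem~\ref{thm:completebargain}.

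First I would exploit the truth-telling incentive constraint~\eqref{eq:truth1}. For each relay define the value function $V_i(\theta_i) \triangleq E_{\theta_{-i}}\{U_i(\theta_i, f(\theta_i,\theta_{-i}))\}$. Since $\theta_i$ is a maximizer over reported types and the only direct dependence of $U_i = t_i - C_i(\theta_i,r_i)$ on the \emph{true} type (holding the report fixed) is through the cost term, the envelope theorem yields
\begin{equation*}
\frac{dV_i(\theta_i)}{d\theta_i} = -E_{\theta_{-i}}\left\{\frac{\partial C_i(\theta_i, r_i(\theta))}{\partial \theta_i}\right\}.
\end{equation*}
Because $C_i$ is strictly decreasing in $\theta_i$, this derivative is positive, so $V_i$ is increasing; hence the individual-rationality constraint~\eqref{eq:nonnegative} binds at the lowest type, and by an argument mirroring Lemma~\ref{lemma:zero} the profit-seeking source drives that type's expected utility to zero, $V_i(\underline{\theta}_i)=0$. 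Integrating the envelope identity then pins down $E_{\theta_{-i}}\{t_i\}$ in terms of $C_i$ plus an integral of $\partial C_i/\partial\theta_i$.

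Next I would substitute this expression into the source's objective $E_\theta\{U_s\} = W_s(r_s) - E_\theta\{C_s(\theta_s,r_0)\} - \sum_i E_\theta\{t_i\}$ and integrate by parts in $\theta_i$ on the nested integral. This is the step that manufactures the hazard-rate factor $\frac{1-F_i(\theta_i)}{f_i(\theta_i)}$ and collapses the expected payment into $E_\theta\{J_i(\theta_i,r_i)\}$ with $J_i$ as in~\eqref{Virtual Cost}. After this manipulation the source's expected utility equals a constant minus $E_\theta\{C_s(\theta_s,r_0) + \sum_i J_i(\theta_i, r_i(\theta))\}$, so maximizing source utility is equivalent to minimizing expected total virtual cost. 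Since this expectation is an integral over independent types and the feasibility constraint $\mathbf{r}\in R$ applies separately to each realization, the minimization decouples: the optimal rule minimizes $C_s(\theta_s,r_0) + \sum_i J_i(\theta_i,r_i)$ pointwise for almost every $\theta$. Pointwise minimization is precisely the socially optimal allocation of a complete-information network whose relay cost functions are the virtual costs $J_i$ (the source overflow cost $C_s$ is unchanged, as $\theta_s$ is common knowledge and carries no information rent). By Theorem~\ref{thm:completebargain} every (Bayesian) Nash equilibrium of that complete-information game realizes exactly this allocation, giving the claimed inclusion.

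The main obstacle is the envelope/integration-by-parts reduction together with verifying that the virtual-cost minimizer is genuinely \emph{implementable} — i.e. that the resulting allocation is monotone in each type, so that it is consistent with the second-order incentive condition underlying~\eqref{eq:truth1} and no ``ironing'' is required. Here the assumption $\partial^2 C_i/\partial\theta_i\,\partial r_i \le 0$ plays the role of Myerson's regularity condition: it ensures $J_i$ inherits the monotonicity and convexity in $r_i$ needed for the pointwise optimizer to be monotone in $\theta_i$, so that the truth-telling outcome and the virtual-cost complete-information optimum coincide rather than merely bounding one another.
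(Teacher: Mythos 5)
Your proposal is correct and follows essentially the same route as the paper's own proof: the envelope theorem applied to the truth-telling constraint, the individual-rationality constraint binding at the lowest type (via the monotonicity of the value function and a Lemma~\ref{lemma:zero}-style argument), integration by parts to produce the hazard-rate factor $\frac{1-F_i(\theta_i)}{f_i(\theta_i)}$ in $J_i$, and the reduction of the source's expected-utility maximization to minimization of expected virtual cost, connected to the complete-information game through Theorem~\ref{thm:completebargain}. Your closing discussion of implementability also mirrors the paper's handling of the second-order condition, which it discharges using $\partial^2 C_i/\partial\theta_i\partial r_i \le 0$ together with the monotonicity requirement $\partial r_i/\partial\theta_i \ge 0$.
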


%
%
%

\begin{proof}
Please see the Appendix.
\end{proof}

We refer to the functions $J_i$ as \emph{virtual cost functions}. Note that
by Theorem~\ref{thm:completebargain}, all Nash equilibria corresponding to
games with complete information are efficient. Thus, the set of outcomes
corresponding to the Nash equilibria for the complete information game with
virtual link cost functions $J_i(\theta_i, r_i)$ is given by
\begin{equation*}
{\mathbf{r}}^{\prime }= \arg \min_{{\mathbf{r}} \in R} C_s(\theta_s, r_0) +
\sum_{i=1}^n J_i(\theta_i, r_i).
\end{equation*}

If $J_{i}(\theta _{i},r_{i})$ is strictly convex in $r_i$ for all $i$, then the optimization problem has
a unique solution.  For instance, if all the relays' types $\theta_i$ are
uniformly distributed on $[0,1]$, and the cost
functions are given by $\frac{1}{\theta_i}(e^{r_i}-1)$, then
$J_{i}(\theta
_{i},r_{i})=C_{i}(\theta _{i},r_{i})-\frac{1-F_{i}(\theta _{i})}{%
f_{i}(\theta _{i})}\frac{\partial C_{i}(\theta _{i},r_{i})}{\partial \theta
_{i}}=\frac{1}{\theta _{i}^{2}}(e^{r_{i}}-1)$, which is strictly
convex in $r_i$ and strictly decreasing in $\theta_i$.
In this case, if a Bayesian Nash Equilibrium of the game with
source bargaining power exists, then the
corresponding traffic allocation is the solution of the optimization problem.
In general, the set of traffic allocations corresponding to the Bayesian Nash Equilibria
(of the game with source bargaining power) is a subset of the solution set
for the optimization.  In the next section, we use this fact to bound the efficiency loss for games with incomplete
information.

\subsection{{\protect\large \textsl{Efficiency Analysis}}}

In this section, we bound the amount of inefficiency in the outcomes for
games with incomplete information. We focus on the inelastic scenario where $%
r_0 = 0$. Following~\cite{Roughgarden02}, define the price of anarchy for
type $\theta$ as:

\begin{equation}
\rho (\theta )=\frac{\max_{\mathbf{r}\in R^{E}} \sum\limits_{i}C_{i}(\theta
_{i},r_{i})}{\min_{\mathbf{r} \in R} \sum\limits_{i}C_{i}(\theta _{i},r_{i})}
\end{equation}
where $R^{E}$ is the set of all Bayesian Nash Equilibria for the game with
incomplete information.  Let $R^{J} \equiv \arg \min_{\mathbf{r} \in R}
\sum_{i}J_{i}(\theta _{i},r_{i})$. By Theorems~\ref{thm:revelation} and~\ref%
{thm:virtual}, we have $R^{E}\subseteq R^{J}$. Therefore,
\begin{equation}
\rho (\theta )\leq \frac{\max_{\mathbf{r} \in
R^{J}}\sum\limits_{i}C_{i}(\theta _{i},r_{i})}{\min_{\mathbf{r} \in
R}\sum\limits_{i}C_{i}(\theta _{i},r_{i})}
\end{equation}

Since the link cost functions are strictly convex, the socially optimal
allocation $\mathbf{r}^{\ast }$ are given by the necessary and sufficient
conditions in~\eqref{eq:optcond}. An allocation $\mathbf{r}^{\prime }$ in $%
R^{J}$ must satisfy the following necessary conditions: for all $i\in
\{1,\ldots ,n\}$ such that $r_{i}^{\prime }>0$,
\begin{eqnarray}
&&\frac{\partial C(\theta _{i},r_{i}^{\prime })}{\partial r_{i}}-\frac{%
1-F_{i}(\theta _{i})}{f_{i}(\theta _{i})}\frac{\partial ^{2}C(\theta
_{i},r_{i}^{\prime })}{\partial \theta _{i}\partial r_{i}}
\label{Virtual Optimality} \\
&\leq &\frac{\partial C(\theta _{j},r_{j}^{\prime })}{\partial r_{j}}-\frac{%
1-F_{j}(\theta _{j})}{f_{j}(\theta _{j})}\frac{\partial ^{2}C(\theta
_{i},r_{j}^{\prime })}{\partial \theta _{i}\partial r_{j}}\quad \text{for
all $j$}  \notag
\end{eqnarray}

We now bound the price of anarchy in the symmetric case.

\begin{theorem}
Consider the symmetric case where the link cost functions $C_i(\theta_i,
r_i) $ and the type distributions $F_i(\theta_i)$ are the same for all
relays. If (i) $J(\theta _{i},r_{i})$ is convex in $r_{i} $ and decreasing
in $\theta _{i}$, (ii) $X(\theta _{i},r_{i}) \equiv J(\theta
_{i},r_{i})-C(\theta _{i},r_{i})$ is concave in $r _{i}$, (iii) $\frac{%
\partial X(\theta _{i},r_{i})}{\partial \theta _{i}\partial r_{i}} \leq 0$,
then the price of anarchy $\rho(\theta)$ can be upper bounded as follows.

If
the marginal cost function $c(\theta_i,r_i) = \frac{\partial C(\theta_i, r_i)%
}{\partial r_i}$ is concave, then $\rho (\theta) \leq n$, where $n$ is the
number of relays (with equality if and only if $c(\theta_i,r_i)$ is linear 
in $r_i$ and the relay types $\theta_i$ are all the same). 
If $\frac{c(\underline{\theta },r_{s})}{c(\overline{\theta
},0)}\leq k$ for some constant $k$, then $\rho (\theta )\leq k$. \label%
{thm:anarchy}
\end{theorem}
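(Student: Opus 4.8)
The plan is to build on the inclusion $R^{E}\subseteq R^{J}$ already established, which gives $\rho(\theta)\le \frac{\max_{\mathbf{r}\in R^{J}}\sum_i C(\theta_i,r_i)}{\min_{\mathbf{r}\in R}\sum_i C(\theta_i,r_i)}$, and to reduce both claimed bounds to the computations in the proofs of Theorems~\ref{Efficiency1} and~\ref{Efficiency2}. The only quantity that differs from the pricing-game analysis is the numerator: there the worst outcome placed all traffic on the highest-type relay, whereas here it is the real cost of a virtual-cost optimizer $\mathbf{r}'\in R^{J}$. The second bound requires nothing new: for any feasible $\mathbf{r}$ one has $c(\theta_i,r_i)\le c(\underline{\theta},r_s)$ (since $c$ is increasing in $r_i$ and decreasing in $\theta_i$), so the numerator is at most $c(\underline{\theta},r_s)r_s$ while the denominator is at least $c(\overline{\theta},0)r_s$, exactly as in Theorem~\ref{Efficiency2}, giving $\rho(\theta)\le c(\underline{\theta},r_s)/c(\overline{\theta},0)\le k$. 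The whole weight of the first bound, by contrast, rests on the single key inequality
\[
\sum_{i} C(\theta_i, r'_i)\ \le\ C(\max_i \theta_i, r_s)\qquad\text{for every } \mathbf{r}'\in R^{J},
\]
i.e.\ the real cost of any $J$-optimal allocation never exceeds the cost of concentrating all $r_s$ on the best relay.

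To prove the key inequality I would work directly with the first-order conditions~\eqref{Virtual Optimality} for $R^{J}$. Relabel the relays so that $\theta_1=\max_i\theta_i$, and write $j(\theta_i,r_i)\triangleq \partial J/\partial r_i=c(\theta_i,r_i)+x(\theta_i,r_i)$ with $x\triangleq \partial X/\partial r_i$, noting $X(\theta_i,0)=0$ and $X\ge 0$. First I would establish that relay~$1$ carries the most traffic, $r'_1\ge r'_i$ for all $i$: assumption~(i) makes $j$ increasing in $r_i$, and since $\frac{\partial^2 J}{\partial\theta_i\partial r_i}=\frac{\partial^2 C}{\partial\theta_i\partial r_i}+\frac{\partial^2 X}{\partial\theta_i\partial r_i}\le 0$ by the model assumption and~(iii), $j$ is decreasing in $\theta_i$; equalization of $j$ across the active relays then forces the highest-type relay to the largest allocation and, in particular, makes relay~$1$ active. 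Next I would convert virtual optimality into a statement about real marginal costs: from $j(\theta_1,r'_1)=j(\theta_i,r'_i)$ for active $i$ we get $c(\theta_1,r'_1)-c(\theta_i,r'_i)=x(\theta_i,r'_i)-x(\theta_1,r'_1)$, and since $x$ is decreasing in $r_i$ by~(ii) and decreasing in $\theta_i$ by~(iii), the facts $r'_1\ge r'_i$ and $\theta_1\ge\theta_i$ make the right-hand side nonnegative, so $c(\theta_1,r'_1)\ge c(\theta_i,r'_i)$ for every active relay.

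With these two facts the key inequality becomes a convexity argument. Let $\Phi(\mathbf{r})\triangleq\sum_i C(\theta_i,r_i)$, which is convex on the feasible simplex, and consider the segment from $\mathbf{r}'$ to the vertex $\mathbf{r}^{(1)}$ that assigns all of $r_s$ to relay~$1$. The directional derivative of $\Phi$ at $\mathbf{r}'$ along $\mathbf{r}^{(1)}-\mathbf{r}'$ equals $c(\theta_1,r'_1)(r_s-r'_1)-\sum_{i\ge 2}c(\theta_i,r'_i)r'_i$, which, using $r_s-r'_1=\sum_{i\ge 2}r'_i$, is $\sum_{i\ge 2}(c(\theta_1,r'_1)-c(\theta_i,r'_i))r'_i\ge 0$ by the previous step (inactive relays contribute $0$ both here and to $\sum_i C(\theta_i,r'_i)$). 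A convex function with nonnegative initial slope is nondecreasing along the segment, so $\Phi(\mathbf{r}')\le\Phi(\mathbf{r}^{(1)})=C(\max_i\theta_i,r_s)$, which is the key inequality. The first bound then follows verbatim from the chain in Theorem~\ref{Efficiency1}: lower-bounding the denominator by $\sum_i a_i^2\int_0^{r_s}c(\theta_i,r)\,dr\ge(\sum_i a_i^2)\int_0^{r_s}c(\max_i\theta_i,r)\,dr$ via concavity of $c$ and $\partial^2 C/\partial\theta_i\partial r_i\le 0$, and then using $\sum_i a_i^2\ge 1/n$; the equality characterization is inherited from that chain.

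The hard part will be making the monotonicity steps rigorous in the presence of inactive relays and of possible non-uniqueness of $\mathbf{r}'$. The equalization conditions hold only among relays with $r'_i>0$, so the argument that relay~$1$ is active and carries the most traffic must be phrased through the Kuhn--Tucker conditions~\eqref{Virtual Optimality} rather than a naive ``equal marginal cost'' statement; fortunately inactive relays are harmless, contributing nothing to either side. Because I prove the key inequality for \emph{every} $\mathbf{r}'\in R^{J}$, the bound covers $\max_{\mathbf{r}\in R^{J}}$ without needing $J$ to be strictly convex. The real leverage of hypotheses~(ii) and~(iii) is precisely the sign of $x(\theta_i,r'_i)-x(\theta_1,r'_1)$ in the second step, which is what turns ``virtual optimality'' into the real-cost comparison $c(\theta_1,r'_1)\ge c(\theta_i,r'_i)$ that drives the whole proof.
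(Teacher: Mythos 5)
Your proof is correct, and it shares the paper's outer skeleton---use $R^{E}\subseteq R^{J}$, reduce everything to the key inequality $\sum_{i}C(\theta_{i},r_{i}')\le C(\max_{i}\theta_{i},r_{s})$ for every $\mathbf{r}'\in R^{J}$, then recycle the chains of Theorems~\ref{Efficiency1} and~\ref{Efficiency2} (your treatment of the $k$-bound, which needs no key inequality at all, matches the paper's)---but your proof of the key inequality itself takes a genuinely different route. The paper never compares $\mathbf{r}'$ to the concentrated allocation directly: it compares $\mathbf{r}'$ to the \emph{efficient} allocation $\mathbf{r}^{\ast}$ via a pairwise-exchange claim (for $\theta_{m}>\theta_{k}$ with $r_{m}+r_{k}$ held fixed, the $J$-optimal split puts weakly more on the higher type, $r_{m}'\ge r_{m}^{\ast}$), and then runs an induction over relays ordered by type, repeatedly absorbing each relay's traffic into the next-higher-type relay via $C(\theta_{i},r_{i,i-1})\ge C(\theta_{i},r_{i}')+C(\theta_{i-1},r_{i-1}')$. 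You instead establish two global facts at $\mathbf{r}'$---the top-type relay is active with the largest share and hence, by KKT equalization of $j=c+x$ and the sign of $x(\theta_{i},r_{i}')-x(\theta_{1},r_{1}')$, the largest \emph{real} marginal cost---and close with a one-shot convexity argument: the directional derivative of $\sum_{i}C$ at $\mathbf{r}'$ toward the concentrated vertex is nonnegative, so convexity forces the cost to be nondecreasing along that segment. The sign manipulation of $x$ is the same engine in both proofs (this is exactly where hypotheses (ii)--(iii) enter), but your packaging buys something: it never needs $\mathbf{r}^{\ast}$ for the numerator bound, it covers $\max_{\mathbf{r}\in R^{J}}$ without any uniqueness of the $J$-minimizer, and it sidesteps the paper's least rigorous step, namely that its induction applies the pairwise claim at intermediate allocations whose pairwise sums are asserted, not shown, to match those of $\mathbf{r}^{\ast}$. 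The paper's route, in exchange, yields a structural by-product yours does not: the virtual optimizer pairwise dominates the efficient allocation in the direction of higher types, which has independent interpretive value. One shared caveat: your Step A (top relay active with largest share) implicitly needs $\partial J/\partial r_{i}$ strictly increasing in $r_{i}$ to exclude flat-segment degeneracies permitted by the weak convexity in hypothesis (i); the paper's equalization and interiority assumptions hide the same issue, so you are at parity on rigor, and you flagged the point explicitly.
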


Note that for the example where all the relays' types $\theta_i$ are
uniformly distributed on $[0,1]$ and the cost
functions are given by $\frac{1}{\theta_i}(e^{r_i}-1)$, the assumptions
of the Theorem are satisfied. 

\begin{proof}
Let $(r_{i}^{\prime })_{i\in I}\in \arg \min_{i}\sum_{i}J(\theta _{i},r_{i})$%
, and $(r_{i}^{\ast })$ be the efficient allocation. We first prove that if $%
\theta _{m}>\theta _{k}$, and $r_{m}+r_{k}=r_{mk}$ is fixed, $r_{m}^{\prime
}\geq r_{m}^{\ast }$. If for any $\theta _{k}<\theta _{m}$, $r_{k}^{\prime
}=0$, then the inequality immediately holds. We then consider the situation that
there exists some $\theta _{k}<\theta _{m}$ and $r_{k}^{\prime }>0$. Let $%
x(\theta _{i},r_{i})=\frac{\partial X(\theta _{i},r_{i})}{\partial r_{i}}$.
Thus for the optimal allocation, $r_{m}^{\ast }>r_{k}^{\ast }>0$ and $%
r_{m}^{\ast }+r_{k}^{\ast }=r_{mk}$. As $x(\theta _{i},r_{i})$ is decreasing
in $\theta _{i}$, we have $x(\theta _{m},r_{k}^{\ast }) \leq x(\theta
_{k},r_{k}^{\ast })$. As $x(\theta _{i},r_{i})$ is decreasing in $r_{i}$, $%
x(\theta _{m},r_{m}^{\ast }) \leq x(\theta _{m},r_{k}^{\ast })$. Thus $x(\theta
_{m},r_{m}^{\ast }) \leq x(\theta _{k},r_{k}^{\ast })$. As $c(\theta
_{m},r_{m}^{\ast })=c(\theta _{k},r_{k}^{\ast })$ , $c(\theta
_{m},r_{m}^{\ast })+x(\theta _{m},r_{m}^{\ast }) \leq c(\theta _{k},r_{k}^{\ast
})+x(\theta _{k},r_{k}^{\ast })$. By (\ref{Virtual Optimality}), $c(\theta
_{m},r_{m}^{\prime })+x(\theta _{m},r_{m}^{\prime }) \geq c(\theta
_{k},r_{k}^{\prime })+x(\theta _{k},r_{k}^{\prime })$, as $r_{k}^{\prime }>0$%
. As $r_{k}^{\ast }+r_{m}^{\ast }=r_{mk}=r_{k}^{\prime }+r_{m}^{\prime }$,
and virtual cost function is convex, we have $r_{m}^{\prime } \geq r_{m}^{\ast }$.

Now we prove that $\sum_{i}C(\theta _{i},r_{i}^{\prime })\leq
C(\max_{i}\theta _{i},r_{s})$. Without loss of generality, we assume that $%
\theta _{1}<\theta _{2}<...<\theta _{n}$.  Any situation where
some types are the same can be handled by modifying number of relays.  We have
\begin{eqnarray*}
&&\min_{\sum_{i}r_{i}=r_{s}}\sum_{i}C(\theta _{i},r_{i}) \\
&=&\min_{\sum_{i}r_{i}=r_{s}} \Big[C(\theta _{n},r_{n}) \\
&&+\min_{\sum_{i<n}r_{i}=r_{s}-r_{n}} \Big[\sum_{i>1}C(\theta
_{n-1},r_{n-1})+\min ... \\
&&+\min_{r_{1}+r_{2}=r_{s}-\sum_{i>2}r_{i}}C(\theta _{2},r_{2})+C(\theta
_{1},r_{1}) \Big]\cdots \Big]
\end{eqnarray*}%
As we showed above, $r_{i}^{\prime } \geq r_{i}^{\ast }$ given $r_{i}^{\prime
}+r_{i-1}^{\prime }=r_{i}^{\ast }+r_{i-1}^{\ast }=r_{i,i-1}$. Thus $C(\theta
_{i},r_{i,i-1})\geq C(\theta _{i},r_{i}^{\prime })+C(\theta
_{i-1},r_{i-1}^{\prime })\geq C(\theta _{i},r_{i}^{\ast })+C(\theta
_{i-1},r_{i-1}^{\ast })$.  By induction, we can prove that $%
\sum_{i}C(\theta _{i},r_{i}^{\prime })\leq C(\max_{i}\theta _{i},r_{s})$.
Now using the same technique as in the proofs of Theorems \ref{Efficiency1} and \ref{Efficiency2},
we obtain the result.
\end{proof}

If either the virtual cost functions $J_i(\theta_i, r_i)$ are not convex, or
the link cost functions and type distributions are not the same across
relays, then higher prices of anarchy may result. Consider the situation in
Figure 2. Here, there are only two relays. $(r_{1}^{\ast },r_{s}-r_{1}^{\ast
})$ is the efficient allocation. Since the type distributions are not the
same, the marginal virtual costs are as indicated in the figure. To minimize
the sum of the virtual costs, the source allocates all traffic to relay 2,
while this allocation is clearly the worst outcome for minimizing the sum of
the link costs.

\begin{figure}[h]
\centering
\includegraphics[scale = 0.6, bb=117 483 345 709]{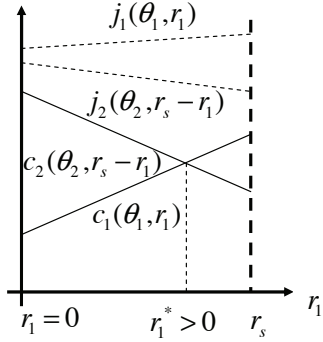}
\caption{Efficiency Loss in Asymmetric Case}
\end{figure}

\section{Conclusion}

This work investigated the impact of incomplete information on incentives
for node cooperation in parallel relay networks. We considered two
situations in which source either has partial bargaining power or full
bargaining power. For the situation where the source has partial bargaining
power, we have shown that all Nash Equilibria in the complete information
game are efficient, including those induced by linear charging functions. We
then characterized the Bayesian Nash Equilibrium for the incomplete
information game in which relays propose linear pricing functions, and
showed that incomplete information can induce inefficiencies, which are
exacerbated by asymmetric prior knowledge on the type distribution. In the
situation where the source has full bargaining power, we first showed that in the
game with complete information, (Bayesian) Nash equilibria exist and are all
efficient. Next, we investigated the game with incomplete information. To
deal with the difficulty of characterizing the Bayesian Nash Equilibria in
this case, we first showed that if a resource allocation outcome can be
realized by a Bayesian Nash equilibrium, then there exists a ``truth
telling" Bayesian Nash equilibrium that realizes the outcome.  We then showed
that the set of outcomes for the ``truth telling" Bayesian Nash equilibria is
included in the set of outcomes for
the Nash equilibria for a {complete information game}, in
which the link cost functions are replaced by a specified ``virtual cost
functions." Using this approach, we obtained for a symmetric network
scenario a bound on the amount of inefficiency which may result from
incomplete information.

\section{Appendix}

{\em Proof of Theorem~\ref{thm:virtual}}:
The first and second-order conditions for~\eqref{eq:truth1} are:
\begin{equation}
\left.\frac{dE_{\theta _{-i}}\left\{ U_{i}(\theta _{i},f(\widetilde{\theta
_{i}},\theta _{-i}))\right\} }{d\widetilde{\theta _{i}}}\right|_{\widetilde{%
\theta _{i}}=\theta _{i}}=0\text{ \ }
\end{equation}
and
\begin{equation}
\left.\frac{d^{2}E_{\theta _{-i}}\left\{ U_{i}(\theta _{i},f(\widetilde{%
\theta _{i}},\theta _{-i}))\right\} }{d\widetilde{\theta _{i}}^{2}}\right|_{%
\widetilde{\theta _{i}}=\theta _{i}}\leq 0.
\end{equation}
The first-order condition is equivalent to
\begin{eqnarray}
&&\left. E_{\theta _{-i}}\frac{dt_{i}(\widetilde{\theta _{i}},\theta _{-i})}{%
d\widetilde{\theta _{i}}}\right\vert _{\widetilde{\theta _{i}}=\theta _{i}}
\\
&=&\left. E_{\theta _{-i}}\left\{ \frac{\partial C_{i}(\theta _{i},r_{i}(%
\widetilde{\theta _{i}},\theta _{-i}))}{\partial r_{i}}\frac{dr_{i}(%
\widetilde{\theta _{i}},\theta _{-i})}{d\widetilde{\theta _{i}}}\right\}
\right\vert _{\widetilde{\theta _{i}}=\theta _{i}}.
\end{eqnarray}
The second-order condition is equivalent to
\begin{eqnarray}
&&\left. E_{\theta _{-i}}\frac{d^{2}t_{i}(\widetilde{\theta _{i}},\theta
_{-i})}{d\widetilde{\theta _{i}}^{2}}\right\vert _{\widetilde{\theta _{i}}%
=\theta _{i}}  \notag \\
&\leq &E_{\theta _{-i}}\left\{ \frac{\partial ^{2}C_{i}(\theta _{i},r_{i}(%
\widetilde{\theta _{i}},\theta _{-i}))}{\partial r_{i}^{2}}\left[ \frac{%
dr_{i}(\widetilde{\theta _{i}},\theta _{-i})}{d\widetilde{\theta _{i}}}%
\right] ^{2}\right. \\
&&+\left. \left. \frac{\partial C_{i}(\theta _{i},r_{i}(\widetilde{\theta
_{i}},\theta _{-i}))}{\partial r_{i}}\frac{d^{2}r_{i}(\widetilde{\theta _{i}}%
,\theta _{-i})}{d\widetilde{\theta _{i}}^{2}}\right\} \right\vert _{%
\widetilde{\theta _{i}}=\theta _{i}}.  \notag
\end{eqnarray}

By evaluating the first-order condition at $\theta _{i}$
differentiating with respect to $\theta_i$, we get:
\begin{eqnarray}
&&E_{\theta _{-i}}\left\{ \frac{d^{2}\left\{ t_{i}(\theta _{i},\theta
_{-i})\right\} }{d\theta _{i}^{2}}\right\}  \notag \\
&=& E_{\theta _{-i}} \left\{\frac{\partial ^{2}C_{i}(\theta _{i},r_{i}(\theta
_{i},\theta _{-i}))}{\partial r_{i}^{2}}\left[ \frac{dr_{i}(\theta
_{i},\theta _{-i})}{d\theta _{i}}\right] ^{2}\right. \\
&&+\frac{\partial C_{i}(\theta _{i},r_{i}(\theta _{i},\theta _{-i}))}{%
\partial r_{i}}\frac{d^{2}r_{i}(\theta _{i},\theta _{-i})}{d\theta _{i}^{2}}
\notag \\
&&+\left. \frac{\partial ^{2}C_{i}(\theta _{i},r_{i}(\theta _{i},\theta
_{-i}))}{\partial r_{i}\partial \theta _{i}}\frac{dr_{i}(\theta _{i},\theta
_{-i})}{d\theta _{i}}\right\}.  \notag
\end{eqnarray}
Comparing with the second-order condition, we get
\begin{equation}
E_{\theta _{-i}}\frac{\partial ^{2}C_{i}(\theta _{i},r_{i}(\theta
_{i},\theta _{-i}))}{\partial r_{i}\partial \theta _{i}}\frac{dr_{i}(\theta
_{i},\theta _{-i})}{d\theta _{i}}\leq 0.
\label{eq:secondcondition}
\end{equation}

We have already assumed that

\begin{equation}
\frac{\partial ^{2}C_{i}(\theta _{i},r_{i}(\theta _{i},\theta _{-i}))}{%
\partial r_{i}\partial \theta _{i}}\leq 0\text{ for each }\theta _{-i}.
\label{eq:assume1}
\end{equation}

Notice that when an outcome can be realized by a Bayesian Nash Equilibrium,
the following condition must hold:
\begin{equation}
\frac{\partial r_{i}(\theta _{i},\theta _{-i})}{\partial \theta _{i}} \geq 0%
\text{ given any }\theta _{-i}
\label{eq:assume2}
\end{equation}
Otherwise, the source would allocate a higher rate to a
lower type relay, which is not optimal.  Notice that by~\eqref{eq:assume1} and~\eqref{eq:assume2},
~\eqref{eq:secondcondition} automatically holds.

Thus, the following conditions are necessary for the first and second-order
conditions to hold.
\begin{eqnarray*}
&&\frac{dE_{\theta _{-i}}\left\{ t_{i}(\widetilde{\theta _{i}},\theta
_{-i})\right\} }{d\widetilde{\theta _{i}}} \\
&=&\left. E_{\theta _{-i}}\frac{\partial C_{i}(\theta _{i},r_{i}(\widetilde{%
\theta _{i}},\theta _{-i}))}{\partial r_{i}}\frac{dr_{i}(\widetilde{\theta
_{i}},\theta _{-i})}{d\widetilde{\theta _{i}}}\right\vert _{\widetilde{%
\theta _{i}}=\theta _{i}}
\end{eqnarray*}

\begin{equation*}
\frac{\partial r_{i}(\theta _{i},\theta _{-i})}{\partial \theta _{i}}\geq 0%
\text{ given any }\theta _{-i}
\end{equation*}

Let $V_{i}(\theta _{i},\theta _{-i})=\max_{{\widetilde{\theta _{i}}}%
}U_{i}(\theta _{i},r_{i}({\widetilde{\theta _{i}},\theta _{-i}}),t_{i}({%
\widetilde{\theta _{i}},\theta _{-i}}))$.   We use the envelope theorem
just as we did in the previous sections:
\begin{eqnarray}
\frac{dE_{\theta _{-i}}V_{i}(\theta _{i},\theta _{-i})}{d\theta _{i}}
&=&\left.\frac{\partial {E_{\theta _{-i}}U_{i}(\theta _{i},r}_{i}{(%
\widetilde{\theta _{i}},\theta _{-i}),t}_{i}{(\widetilde{\theta _{i}},\theta
_{-i})}}{\partial {\theta _{i}}}\right|_{\widetilde{\theta _{i}}=\theta _{i}}
\notag \\
&=&\left.-\frac{\partial {E_{\theta _{-i}}C_{i}(\theta _{i},r_{i}(\widetilde{%
\theta _{i}},\theta _{-i}))}}{\partial \theta _{i}}\right|_{\widetilde{%
\theta _{i}}=\theta _{i}}
\end{eqnarray}


Let $\overline{\theta_i}$ and $\underline{\theta_i}$ be the upper and
lower bounds on relay node i's type, then
\begin{eqnarray}
&&E_{\theta _{-i}}V_{i}(\theta _{i},\theta _{-i}) \\
&=&E_{\theta _{-i}}V_{i}(\underline{\theta _{i}},\theta _{-i})-\int_{%
\underline{\theta _{i}}}^{\theta _{i}}\frac{\partial E_{\theta _{-i}}{%
C_{i}(\theta _{i},r_{i}(\theta _{i},\theta _{-i}))}}{\partial {\theta _{i}}}%
d\theta _{i}  \notag
\end{eqnarray}

We see from the above equation that, as we already assumed $\frac{%
\partial {C_{i}(\theta _{i},r_{i})}}{\partial {\theta _{i}}}<0$, the
expected utility of relay $i$ is non-decreasing with respect to $\theta _{i}$.
Thus, to guarantee that constraints (\ref{eq:nonnegative}) holds, the lowest
type must receive non-negative profit.  On the other hand, the relay with the
lowest type can never receive a positive profit, otherwise the source will
reduce its profit by some small amount and still guarantee that the contract is
acceptable to all, which contradicts the definition of Bayesian Nash Equilibrium.
Thus, the lowest type relay should receive zero profit.
\begin{equation}
E_{\theta _{-i}}V_{i}(\underline{\theta _{i}},\theta _{-i})=0
\end{equation}%
Plugging in, we get
\begin{equation}
E_{\theta _{-i}}V_{i}(\theta _{i},\theta _{-i})=-\int_{\underline{\theta _{i}%
}}^{\theta _{i}}\frac{\partial E_{\theta _{-i}}{C_{i}(\theta
_{i},r_{i}(\theta _{i},\theta _{-i}))}}{\partial {\theta _{i}}}d\theta _{i}
\end{equation}

Suppose the type distribution function of relay $i$ is $F_{i}(\theta _{i})$ and the
density is $f_{i}(\theta _{i})$. Let $R$ be the expected revenue of the source
node.  Then,

\begin{eqnarray*}
R &=&E_{\theta }\left\{ W_{s}(r_{s})-C_{s}(\theta
_{s},r_{0})-\sum_{i}t_{i}(\theta )\right\}  \\
&=&E_{\theta }\left\{ W_{s}(r_{s})-C_{s}(\theta
_{s},r_{0})-\sum_{i}V_{i}(\theta )-\sum_{i}C_{i}(\theta )\right\}  \\
&=&E_{\theta }\left\{ W_{s}(r_{s})-C_{s}(\theta
_{s},r_{0})-\sum_{i}C_{i}(\theta )\right\}  \\
&&+\sum_{i}\int_{\underline{\theta _{i}}}^{\overline{\theta _{i}}%
}f_{i}(\theta _{i})E_{\theta _{-i}}\left[ \int_{\underline{\theta _{i}}%
}^{\theta _{i}}\frac{\partial C_{i}(\theta _{i}^{\prime },r_{i}(\theta
_{i}^{\prime },\theta _{-i}))}{\partial \theta _{i}^{\prime }}d\theta
_{i}^{\prime }\right] d\theta _{i} \\
&=&E_{\theta }\left\{ W_{s}(r_{s})-C_{s}(\theta
_{s},r_{0})-\sum_{i}C_{i}(\theta )\right\}  \\
&&-\sum_{i}\int_{\underline{\theta _{i}}}^{\overline{\theta _{i}}}E_{\theta
_{-i}}\left[ \int_{\underline{\theta _{i}}}^{\theta _{i}}\frac{\partial
C_{i}(\theta _{i}^{\prime },r_{i}(\theta _{i}^{\prime },\theta _{-i}))}{%
\partial \theta _{i}^{\prime }}d\theta _{i}^{\prime }\right]  \\
&&\times d(1-F_{i}(\theta _{i})) \\
&=&E_{\theta }\left\{ W_{s}(r_{s})-C_{s}(\theta
_{s},r_{s})-\sum_{i}C_{i}(\theta )\right\}  \\
&&-\sum_{i}E_{\theta _{-i}}\left[ \int_{\underline{\theta _{i}}}^{\theta
_{i}}\frac{\partial C_{i}(\theta _{i}^{\prime },r_{i}(\theta _{i}^{\prime
},\theta _{-i}))}{\partial \theta _{i}^{\prime }}d\theta _{i}^{\prime }%
\right]  \\
&&\times (1-F_{i}(\theta _{i}))|_{\underline{\theta _{i}}}^{\overline{\theta
_{i}}} \\
&&+\sum_{i}E_{\theta _{-i}}\int_{\underline{\theta _{i}}}^{\overline{\theta
_{i}}}(1-F_{i}(\theta _{i})) \\
&&\times d\left[ \int_{\underline{\theta _{i}}}^{\theta _{i}}\frac{\partial
C_{i}(\theta _{i}^{\prime },r_{i}(\theta _{i}^{\prime },\theta _{-i}))}{%
\partial \theta _{i}^{\prime }}d\theta _{i}^{\prime }\right]  \\
&=&E_{\theta }\left\{ W_{s}(r_{s})-C_{s}(\theta
_{s},r_{0})-\sum_{i}C_{i}(\theta )\right\}  \\
&&+\sum_{i}E_{\theta _{-i}}\int_{\underline{\theta _{i}}}^{\overline{\theta
_{i}}}(1-F_{i}(\theta _{i})) \\
&&\times d\left[ \int_{\underline{\theta _{i}}}^{\theta _{i}}\frac{\partial
C_{i}(\theta _{i}^{\prime },r_{i}(\theta _{i}^{\prime },\theta _{-i}))}{%
\partial \theta _{i}^{\prime }}d\theta _{i}^{\prime }\right]  \\
&=&E_{\theta }\{W_{s}(r_{s})-C_{s}(\theta _{s},r_{0})\} \\
&&-\sum_{i}E_{\theta _{-i}}\int_{\underline{\theta _{i}}}^{\overline{\theta
_{i}}}C_{i}(\theta _{i},r_{i}(\theta _{i},\theta _{-i})) \\
&&-\frac{1-F_{i}(\theta _{i})}{f_{i}(\theta _{i})}\frac{\partial
C_{i}(\theta _{i},r_{i}(\theta ))}{\partial \theta _{i}}f_{i}(\theta
_{i})d\theta _{i} \\
&=&E_{\theta }[W_{s}(r_{s})-C_{s}(\theta _{s},r_{0})] \\
&&-E_{\theta }\sum_{i}\left( C_{i}(\theta _{i},r_{i}(\theta ))-\frac{%
1-F_{i}(\theta _{i})}{f_{i}(\theta _{i})}\frac{\partial C_{i}(\theta
_{i},r_{i}(\theta ))}{\partial \theta _{i}}\right)
\end{eqnarray*}

Thus, we obtain a game with complete information and full source
bargaining power where the revenue function is changed to
$J_{i}(\theta _{i},r_{i})$ rather than $C_{i}(\theta _{i},r_{i})$.

\bibliographystyle{ieeetr}
\bibliography{./gameNet,./ICCBiblio}

\end{document}